\newcommand{\mc}{\mathcal}
\newcommand{\Oh}{{\mc{O}}}
\newcommand{\Fh}{{\mc{F}}}
\newtheorem{fact}{Fact}
\newtheorem{obs}{Observation}
\title{On the Parameterized Complexity of Reconfiguration Problems}
\begin{document}

\author
{
    Amer E. Mouawad\inst{1}\thanks{Research supported by the Natural Science and Engineering Research Council of Canada.} \and
    Naomi Nishimura\inst{1}$^{\star}$ \and
    Venkatesh Raman\inst{2} \and
    Narges Simjour\inst{1}$^{\star}$ \and
    Akira Suzuki\inst{3}\thanks{Research supported by JSPS Grant-in-Aid for Scientific Research, Grant Number 24.3660.}
}
\institute
{
    David R. Cheriton School of Computer Science\\
    University of Waterloo, Waterloo, Ontario, Canada.\\
    \email{\{aabdomou, nishi, nsimjour\}@uwaterloo.ca}
    \and
    The Institute of Mathematical Sciences\\
    Chennai, India.
    \email{vraman@imsc.res.in}
    \and
    Graduate School of Information Sciences, Tohoku University\\
    Aoba-yama 6-6-05, Aoba-ku, Sendai, 980-8579, Japan. \\
    \email{a.suzuki@ecei.tohoku.ac.jp}
}

\maketitle
\sloppy

\begin{abstract}
  We present the first results on the parameterized complexity of
  reconfiguration problems, where a reconfiguration version of an
  optimization problem $Q$ takes as input two feasible solutions $S$
  and $T$ and determines if there is a sequence of {\em
    reconfiguration steps} that can be applied to transform $S$ into
  $T$ such that each step results in a feasible solution to $Q$.
 For
  most of the results in this paper, $S$ and $T$ are subsets of
  vertices of a given graph and a reconfiguration step adds or deletes
  a vertex.  Our study is motivated by recent results establishing
  that for most NP-hard problems, the classical complexity of
  reconfiguration is PSPACE-complete.

  We address the question for several important graph properties under
  two natural parameterizations: $k$, the size of the solutions, and
  $\ell$, the length of the sequence of steps.  Our first general
  result is an algorithmic paradigm, the {\em reconfiguration kernel},
  used to obtain fixed-parameter algorithms for the reconfiguration
  versions of {\sc Vertex Cover} and, more generally, {\sc Bounded
    Hitting Set} and {\sc
    Feedback Vertex Set}, all parameterized by $k$. In contrast, we
  show that reconfiguring {\sc Unbounded Hitting Set} is $W[2]$-hard
  when parameterized by $k+\ell$. We also demonstrate the
  $W[1]$-hardness of the reconfiguration versions of a large class of
  maximization problems parameterized by $k+\ell$, and of their
  corresponding deletion problems parameterized by $\ell$; in doing
  so, we show that there exist problems in FPT when parameterized by
  $k$, but whose reconfiguration versions are $W[1]$-hard when
  parameterized by $k+\ell$.
\end{abstract}

\section{Introduction}
The reconfiguration version of an optimization problem asks whether it is
possible to transform a source feasible solution $S$ into a target
feasible solution $T$ by a sequence of {\em reconfiguration steps} such that
every intermediate solution is also feasible; other variants return
a (possibly minimum-length) {\em reconfiguration sequence} of solutions.
Reconfiguration problems model real-life dynamic situations in which
we seek to transform a solution into a more desirable one, maintaining
feasibility during the process.  The study of reconfiguration yields
insights into the structure of the solution space of the underlying
optimization problem, crucial for the design
of efficient algorithms.

Motivated by real world situations as well as by trying to
understand the structure of all feasible solutions, there has been a
lot of recent interest in studying the complexity of reconfiguration
problems.  Problems for which reconfiguration has been studied include
{\sc Vertex Colouring}~\cite{BB13,BC09,CVJ08,CVJ09,CVJ11}, {\sc List
  Edge-Colouring}~\cite{IKD12}, {\sc Independent
  Set}~\cite{HD05,IDHPSUU11}, {\sc Clique}, {\sc Set Cover}, {\sc
  Matching}, {\sc Matroid Bases}~\cite{IDHPSUU11}, {\sc
  Satisfiability}~\cite{GKMP09}, {\sc Shortest Path}~\cite{B12,KMM11},
and {\sc Dominating Set}~\cite{HS12,SMN13}.  Most work has been
limited to the problem of determining the existence of a
reconfiguration sequence between two given solutions;
for most NP-complete problems, this problem has been shown to be
PSPACE-complete.

As there are typically exponentially many feasible solutions,
the length of the reconfiguration sequence can be exponential in
the size of the input instance.  It is thus natural to ask whether
reconfiguration problems become tractable if we allow the running time
to depend on the length of the sequence; this approach suggests the use of the
paradigm of parameterized complexity.
In this work, we explore reconfiguration in the framework of
parameterized complexity~\cite{DF97}
under two natural parameterizations:
$k$, a
bound on the size of feasible solutions, and $\ell$, the length
of the reconfiguration sequence.
One of our key results is that for most problems, the reconfiguration
versions remain intractable in the parameterized framework when we
parameterize by $\ell$.  It is important to note that when $k$ is not
bounded, the reconfiguration we study become easy.

We present fixed-parameter algorithms for problems parameterized by
$k$ by modifying known parameterized algorithms for the problems.  The
paradigms of bounded search tree and kernelization typically work by
exploring minimal solutions. However, a reconfiguration sequence may
necessarily include non-minimal solutions.  Any kernel that removes
solutions (non-minimal or otherwise) may render finding a
reconfiguration sequence impossible, as the missing solutions might
appear in every reconfiguration sequence; we must thus ensure that the
kernelization rules applied retain enough information to allow us to
determine whether a reconfiguration sequence exists.  To handle these
difficulties, we introduce a general approach for parameterized
reconfiguration problems.
We use a {\em reconfiguration kernel},
showing how to adapt Bodlaender's cubic kernel~\cite{B07} for {\sc
  Feedback Vertex Set}, and a special kernel by Damaschke and
Molokov~\cite{D09} for {\sc Bounded Hitting Set} (where the cardinality of
 each input set is bounded) to obtain polynomial reconfiguration kernels,
with respect to $k$.
These results can
be considered as interesting applications of kernelization,
 and a general approach for other similar reconfiguration problems.


As a counterpart to our result for {\sc Bounded Hitting Set}, we show that
reconfiguring {\sc Unbounded Hitting Set} or {\sc Dominating Set}
is $W[2]$-hard parameterized by $k + \ell$ (Section~\ref{sec-relate}).
Finally, we show a general result on reconfiguration problems of hereditary
properties and their `parametric duals', implying the $W[1]$-hardness of
reconfiguring {\sc Independent Set}, {\sc Induced Forest}, and {\sc Bipartite Subgraph}
parameterized by $k + \ell$ and {\sc Vertex Cover}, {\sc Feedback Vertex Set},
and {\sc Odd Cycle Transversal} parameterized by $\ell$.

\section{Preliminaries}\label{sec-prelims}

Unless otherwise stated, we assume that each input graph $G$ is a
simple, undirected graph on $n$ vertices with vertex set $V(G)$ and
edge set $E(G)$. To avoid confusion, we refer to {\em nodes} in
reconfiguration graphs (defined below), as distinguished from {\em
  vertices} in the input graph. We use the modified big-Oh notation
$O^*$ that suppresses all polynomially bounded factors.

Our definitions are based on optimization
problems, each consisting of a
polynomial-time recognizable set of valid instances, a set of feasible
solutions for each instance, and an objective function assigning a
nonnegative rational value to each feasible solution.

\begin{definition}
  The {\em reconfiguration graph}  $R_Q(I,\textnormal{adj},k)$,
  consists of a node for each feasible
  solution to instance $I$ of optimization problem $Q$, where the size
  of each solution is at least $k$ for $Q$ a maximization problem
  (of size at most $k$ for $Q$ a minimization problem,
  respectively), for positive integer $k$,
  and an edge between each pair of nodes corresponding to solutions in
  the binary adjacency relation $\textnormal{adj}$ on feasible solutions.
\end{definition}
We define the following {\em reconfiguration problems}, where $S$
and $T$ are feasible solutions for $I$: {\sc $Q$ Reconfiguration}
determines if there is a path from $S$ to $T$ in
$R_Q(I,\textnormal{adj},k)$; the {\em search variant} returns a
{\em reconfiguration sequence}, the sequence of feasible solutions
associated with such a path; and the {\em shortest path variant}
returns the reconfiguration sequence associated with a path of
minimum length.  For convenience, solutions paired by
$\textnormal{adj}$ are said to be {\em adjacent}.

Using the framework developed by Downey and Fellows~\cite{DF97}, a
{\em parameterized reconfiguration problem} includes in the input a
positive integer $\ell$ (an upper bound on the length of the
reconfiguration sequence) and a parameter $p$ (typically $k$ or
$\ell$). For a parameterized problem $Q$ with inputs of the form $(x,
p)$, $|x| = n$ and $p$ a positive integer, $Q$ is {\em
  fixed-parameter tractable} (or in {\em FPT}) if it can be decided in
$f(p) n^c$ time, where $f$ is an arbitrary function and $c$ is a
constant independent of both $n$ and $p$.  $Q$ is in the class {\em
  XP} if it can be decided in $n^{f(p)}$ time.  $Q$ has a
{\em kernel} of size $f(p)$ if there is an algorithm $A$ that
transforms the input $(x, p)$ to $(x', p')$ such that $A$ runs in
polynomial time (with respect to $|x|$ and $p$) and $(x, p)$ is a
yes-instance if and only if $(x', p')$ is a yes-instance, $p' \leq
g(p)$, and $|x'| \leq f(p)$. Each problem in {\em FPT} has a kernel,
possibly of exponential (or worse) size.

We introduce the related notion of a {\em reconfiguration kernel}; it
follows from the definition that a reconfiguration problem that has
such a kernel is in {\em FPT}.

\begin{definition}
  A {\em reconfiguration kernel} of an instance of a parameterized
  reconfiguration problem $(x,p) = (P,\textnormal{adj},S,T,k,\ell,p)$ is a set
  of $h(p)$ instances, for an arbitrary function $h$, such that for $1 \le i \le h(p)$:
\begin{itemize}
\item{} for each instance in the set,
  $(x_i,p_i) = (P,\textnormal{adj},S_i,T_i,k_i,\ell_i,p_i)$, the
  values of  $S_i$, $T_i$, $k_i$,
  $\ell_i$, and $p_i$ can all be computed in polynomial
  time,
\item{} the size of each $x_i$ is bounded by $j(p)$, for an arbitrary
  function $j$, and
\item{} $(x,p)$ is a yes-instance if and only if at least one $(x_i,p_i)$ is a yes-instance.
\end{itemize}
\end{definition}

The main hierarchy of parameterized
complexity classes is $FPT \subseteq W[1] \subseteq W[2] \subseteq
\ldots \subseteq XP$, where $W$-hardness, shown using {\em FPT
  reductions}, is the analogue of NP-hardness in classical complexity.
A parameterized
problem $Q$ {\em FPT reduces} to a parameterized problem $Q'$ if there
is an algorithm $A$ that transforms an instance $(I,p)$ of $Q$ to an
instance $(I', p')$ of $Q'$ such that $A$ runs in time $f(p)
poly(|I|)$ where $f$ is a function of $k$, $poly$ is a polynomial
function, and $p' = g(p)$ for some function $g$.  In addition, the
transformation has the property that $(I,p)$ is a yes-instance of $Q$
if and only if $(I', p')$ is a yes-instance of $Q'$. It is known that
standard parameterized versions (are there $p$ vertices that form the
solution?) of {\sc Clique} and {\sc Independent Set} are complete for
the class $W[1]$, and {\sc Dominating Set} is $W[2]$-complete.  The
reader is referred to~\cite{FG06,N06} for more on parameterized
complexity.


Most problems we consider can be defined
using graph properties,
where a {\em graph property} $\pi$ is a collection of graphs, and is
{\em non-trivial} if it is non-empty and does not contain all graphs.
A graph property is {\em polynomially decidable} if
for any graph $G$, it can be decided in polynomial time whether $G$ is in $\pi$.
For a subset $V' \subseteq V$, $G[V']$ is the
{\em subgraph of $G$ induced on $V'$}, with vertex set $V'$ and edge
set $\{\{u,v\} \in E \mid u,v \in V'\}$.  The property $\pi$ is {\em hereditary} if for any
$G \in \pi$, any induced subgraph of $G$ is also in $\pi$.  Examples
of hereditary properties include graphs having
no edges and graphs having no cycles.
It is well-known~\cite{LY80} that every hereditary property $\pi$
has a forbidden set $\Fh_\pi$, in that a graph has property
$\pi$ if and only if it does not contain any graph in $\Fh_\pi$ as an induced
subgraph.

For a graph property $\pi$, we define two reconfiguration graphs,
where solutions are sets of vertices and two solutions are
adjacent if they differ by the addition or deletion of a
vertex. The {\em subset reconfiguration
graph of $G$ with respect to $\pi$},
$R^{\pi}_{\textsc{sub}}(G,k)$, has a node for each
$S \subseteq V(G)$ such that $|S| \ge k$ and $G[S]$ has
property $\pi$, and the {\em deletion reconfiguration graph of $G$ with
respect to $\pi$}, $R^{\pi}_{\textsc{del}}(G,k)$,
has a node for each $S \subseteq
V(G)$ such that $|S| \le k$ and $G[V(G) \setminus S]$ has property
$\pi$.
We can obtain  $R^{\pi}_{\textsc{del}}(G,|V(G)|-k)$
by replacing the set corresponding to each node in
$R^{\pi}_{\textsc{sub}}(G,k)$ by its (setwise) complement.
The following
is a consequence of the fact that two nodes can differ by the
deletion or addition of a single vertex.

\begin{fact}\label{fact-degree-bound}
The degree of each node in $R^{\pi}_{\textsc{sub}}(G,k)$ and each node in
$R^{\pi}_{\textsc{del}}(G,k)$ is at most $|V(G)|$.
\end{fact}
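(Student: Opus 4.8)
The plan is to bound the number of nodes adjacent to a fixed node by classifying each such neighbour according to whether it arises from a deletion or an addition. First I would fix a node of $R^{\pi}_{\textsc{sub}}(G,k)$, corresponding to a set $S \subseteq V(G)$ with $|S| \ge k$ and $G[S] \in \pi$. By the definition of the adjacency relation, every neighbour of this node is a set $S'$ with either $S' = S \setminus \{v\}$ for some $v \in S$, or $S' = S \cup \{v\}$ for some $v \in V(G) \setminus S$; in addition $S'$ must itself be a node, i.e.\ it must satisfy $|S'| \ge k$ and $G[S'] \in \pi$.

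Next I would count the two types separately. There are at most $|S|$ vertices available for deletion and at most $|V(G) \setminus S| = |V(G)| - |S|$ vertices available for addition, and the two resulting families of candidate sets are disjoint, since the former have size $|S| - 1$ and the latter size $|S| + 1$. Hence the number of candidate neighbours is at most $|S| + (|V(G)| - |S|) = |V(G)|$, and since every actual neighbour is among these candidates, the degree of the node is at most $|V(G)|$. The identical argument applies verbatim to a node of $R^{\pi}_{\textsc{del}}(G,k)$: the only changes are that the node constraints become $|S| \le k$ and $G[V(G) \setminus S] \in \pi$, but the enumeration of single-vertex modifications of $S$, and therefore the count $|S| + (|V(G)| - |S|)$, is unaffected.

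There is essentially no obstacle here; the single point requiring a word of care is that not every single-vertex modification of $S$ need yield a valid node, since it may violate the cardinality bound or fail to induce a graph in $\pi$. This only removes elements from the candidate set and hence can only decrease the degree, so the upper bound of $|V(G)|$ continues to hold.
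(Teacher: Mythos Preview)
Your argument is correct and is essentially the same as the paper's, which treats this as an immediate consequence of the adjacency definition and does not spell out the count. You have simply made explicit the split into at most $|S|$ deletion-neighbours and at most $|V(G)|-|S|$ addition-neighbours that the paper leaves implicit.
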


\begin{definition}
For any graph property $\pi$, graph $G$, positive integer
$k$, $S \subseteq V(G)$, and $T \subseteq V(G)$, we define the following
decision problems:

\noindent \textsc{$\pi$-deletion$(G, k)$}: Is there $V' \subseteq V(G)$ such that
$|V'| \leq k$ and $G[V(G) \setminus V'] \in \pi$?\\
\smallskip
\noindent \textsc{$\pi$-subset$(G, k)$}: Is there $V' \subseteq V(G)$ such that
    $|V'| \geq k$ and $G[V'] \in \pi$? \\
\smallskip
\noindent \textsc{$\pi$-del-reconf$(G, S, T, k, \ell)$}: For $S,T \in
V(R^{\pi}_{\textsc{del}}(G,k))$,
is there a path of length at most $\ell$ between the nodes
  for $S$ and $T$ in $R^{\pi}_{\textsc{del}}(G,k)$?\\
\smallskip
\noindent \textsc{$\pi$-sub-reconf$(G, S, T, k, \ell)$}: For $S, T \in
V(R^{\pi}_{\textsc{sub}}(G,k))$,
is there a path of length at most
  $\ell$ between the nodes for $S$ and $T$ in $R^{\pi}_{\textsc{sub}}(G,k)$?
\end{definition}

We say that \textsc{$\pi$-deletion$(G, k)$} and \textsc{$\pi$-subset$(G, k)$}
are {\em parametric duals} of each other.
Note that in \textsc{$\pi$-subset$(G, k)$}, we seek a set of
vertices of size {\it at least} $k$ inducing a subgraph in $\pi$,
whereas in \textsc{$\pi$-deletion$(G, k)$}, we seek a
set of vertices of size {\it at most} $k$  whose {\it complement set} induces a
subgraph in $\pi$.
We refer to \textsc{$\pi$-del-reconf$(G, S, T, k, \ell)$} and
\textsc{$\pi$-sub-reconf$(G, S, T, k, \ell)$}
as {\em reconfiguration problems for $\pi$};
for example, for $\pi$ the set of graphs with no edges,
the former is \textsc{Vertex Cover Reconfiguration} and
the latter is \textsc{Independent Set Reconfiguration}.

\section{Fixed-Parameter Tractability Results}\label{sec-kernel}
We first observe that for any polynomially decidable graph property,
the \textsc{$\pi$-deletion} and \textsc{$\pi$-subset}
reconfiguration versions are in $XP$ when parameterized by $\ell$;
we conduct breadth-first search on the reconfiguration graph
starting at $S$, stopping either upon discovery of $T$ or upon
completing the exploration of $\ell$ levels.
Fact~\ref{fact-degree-bound} implies a bound of at most $n^{\ell}$ vertices to explore in total.

\begin{fact}
\label{xpclaim}
For any polynomially decidable graph property $\pi$,
\textsc{$\pi$-del-reconf$(G, S, T, k, \ell)$} $\in XP$
and
\textsc{$\pi$-sub-reconf$(G, S, T, k, \ell)$} $\in XP$
when parameterized by $\ell$.
\end{fact}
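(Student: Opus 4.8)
The plan is to realize the decision procedure directly as a depth-bounded breadth-first search on the relevant reconfiguration graph, invoking Fact~\ref{fact-degree-bound} to bound the size of the explored region.

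First I would set up the search, treating \textsc{$\pi$-sub-reconf$(G,S,T,k,\ell)$}; the deletion version is handled verbatim with $R^{\pi}_{\textsc{del}}(G,k)$ in place of $R^{\pi}_{\textsc{sub}}(G,k)$ and the feasibility test ``$|S''| \le k$ and $G[V(G)\setminus S''] \in \pi$'' (or, via the complementation remark preceding Fact~\ref{fact-degree-bound}, reduced to the subset case). Maintain a queue of nodes of $R^{\pi}_{\textsc{sub}}(G,k)$, each tagged with its BFS distance from $S$, initialized with $S$ at distance $0$, together with a set of visited nodes. Repeatedly dequeue a node $S'$ at distance $d$: if $S' = T$, answer \emph{yes}; if $d = \ell$, discard $S'$ without expanding it; otherwise generate each candidate neighbour, namely every set obtained from $S'$ by adding one vertex of $V(G)\setminus S'$ or by deleting one vertex of $S'$. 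For each candidate $S''$, test whether it is actually a node of $R^{\pi}_{\textsc{sub}}(G,k)$, i.e.\ whether $|S''| \ge k$ and $G[S'']$ has property $\pi$ --- the latter using the polynomial-time membership test guaranteed by the hypothesis that $\pi$ is polynomially decidable --- and, if so and if $S''$ is unvisited, mark it visited and enqueue it at distance $d+1$. If the queue empties without encountering $T$, answer \emph{no}.

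Correctness follows from the correctness of breadth-first search: a node is enqueued exactly when it is a node of $R^{\pi}_{\textsc{sub}}(G,k)$ reachable from $S$ by a path all of whose nodes lie within distance $\ell$ of $S$; since a shortest $S$--$T$ path of length at most $\ell$ has its $i$-th node at distance $i \le \ell$ from $S$, the search reaches $T$ if and only if such a path exists. For the running time, Fact~\ref{fact-degree-bound} gives that every node of $R^{\pi}_{\textsc{sub}}(G,k)$ has at most $|V(G)| = n$ neighbours, so the number of nodes within distance $\ell$ of $S$, and hence the number ever enqueued, is at most $1 + n + n^2 + \cdots + n^{\ell} = O(n^{\ell})$. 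Processing a single node costs polynomial time: generating its at most $n$ candidate neighbours, running the polynomial-time $\pi$-membership test on each, and checking the size bound and the visited set. Hence the total running time is $n^{\ell}\cdot n^{O(1)} = n^{O(\ell)} = n^{f(\ell)}$ for a suitable function $f$, which is precisely the form required for membership in $XP$ parameterized by $\ell$.

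There is no substantive obstacle here; the only point that warrants care is that one must never materialize the full graph $R^{\pi}_{\textsc{sub}}(G,k)$, whose size can be exponential in $n$, but instead generate nodes lazily outward from $S$ and halt expansion at depth $\ell$, so that Fact~\ref{fact-degree-bound} can be applied to keep the explored portion of size $O(n^{\ell})$.
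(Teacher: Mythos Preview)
Your proposal is correct and follows essentially the same approach as the paper: perform a breadth-first search from $S$ in the reconfiguration graph, halting at depth $\ell$ or upon reaching $T$, and invoke Fact~\ref{fact-degree-bound} to bound the number of explored nodes by $n^{\ell}$. The paper's justification is a single sentence to this effect; your version simply spells out the BFS mechanics and the per-node cost in more detail.
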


For an instance $(G,S,T,k,\ell)$, we partition $V(G)$ into the sets
$C = S \cap T$ (vertices common to $S$ and $T$), $S_D = S \setminus C$
(vertices to be deleted from $S$ in the course of reconfiguration),
$T_A = T \setminus C$ (vertices to be added to form $T$), and $O =
V(G) \setminus (S \cup T) = V(G) \setminus (C \cup S_D \cup T_A)$ (all
other vertices). Furthermore, we can partition $C$ into two sets $C_F$
and $C_M = C \setminus C_F$, where a vertex is in $C_F$ if and only if
it is in every feasible solution of size bounded by $k$.

The following fact is a consequence of the definitions above, the fact
that $\pi$ is hereditary, and the observations that $G[S_D]$ and $G[O]$
are both subgraphs of $G[V(G) \setminus T]$, and $G[T_A]$ and $G[O]$
are both subgraphs of $G[V(G) \setminus S]$.

\begin{fact}\label{fact-pieces}
For an instance \textsc{$\pi$-del-reconf$(G, S, T, k, \ell)$} of a
reconfiguration problem for hereditary property $\pi$, $G[O]$,
$G[S_D]$, and $G[T_A]$ all have property $\pi$.
\end{fact}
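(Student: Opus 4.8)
The plan is to unwind the definitions and then apply heredity once for each of the two given solutions $S$ and $T$. First I would record that, since $S$ and $T$ are nodes of $R^{\pi}_{\textsc{del}}(G,k)$, the definition of the deletion reconfiguration graph gives $G[V(G)\setminus S]\in\pi$ and $G[V(G)\setminus T]\in\pi$. Next I would rewrite these two complements in terms of the partition $V(G) = C \cup S_D \cup T_A \cup O$: because $S = C \cup S_D$ and $C \subseteq S$, we have $V(G)\setminus S = T_A \cup O$, and because $T = C \cup T_A$ and $C \subseteq T$, we have $V(G)\setminus T = S_D \cup O$. Hence $G[T_A \cup O]\in\pi$ and $G[S_D \cup O]\in\pi$.

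Then I would invoke the fact that $\pi$ is hereditary, so every induced subgraph of a graph in $\pi$ is again in $\pi$. Applying this to $G[S_D \cup O]$ shows that its induced subgraphs $G[S_D]$ and $G[O]$ both have property $\pi$; applying it to $G[T_A \cup O]$ shows that its induced subgraphs $G[T_A]$ and $G[O]$ both have property $\pi$ (re-confirming $G[O]\in\pi$ along the way). This establishes the claim for all three of $G[O]$, $G[S_D]$, and $G[T_A]$.

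There is essentially no obstacle here: the only point requiring care is the partition arithmetic, namely verifying that $V(G)\setminus T$ is exactly $S_D \cup O$ (with no stray vertices of $C$) and symmetrically for $V(G)\setminus S$; both follow at once from $C \subseteq S \cap T$. The statement is a bookkeeping lemma that isolates, for later use in the kernelization arguments, the structural consequences of $S$ and $T$ being valid deletion solutions together with the heredity of $\pi$.
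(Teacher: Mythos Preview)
Your proposal is correct and follows exactly the same approach as the paper: use that $S,T\in V(R^{\pi}_{\textsc{del}}(G,k))$ gives $G[V(G)\setminus S]\in\pi$ and $G[V(G)\setminus T]\in\pi$, identify these complements with $T_A\cup O$ and $S_D\cup O$, and then apply heredity to pass to the induced subgraphs $G[S_D]$, $G[T_A]$, and $G[O]$. The paper states this reasoning in the sentence immediately preceding the Fact rather than as a separate proof, but the content is identical.
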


In any reconfiguration sequence, each
vertex in $S_D$ must be deleted and each vertex in $T_A$ must be
added.  We say that a reconfiguration sequence
{\em touches} a vertex $v$ if $v$ is either added or deleted in at least
one reconfiguration step.  Any vertex that is not touched is
{\em untouched}. In fact, since $\ell$ implies a bound on the total number
of vertices that can be touched in a reconfiguration sequence,
setting $\ell = |S_D| + |T_A|$ drastically simplifies the problem.

\begin{obs}\label{obs-touch-once}
For any polynomially decidable hereditary graph property $\pi$, if $|S_D| + |T_A| = \ell$, then
\textsc{$\pi$-del-reconf$(G, S, T, k, \ell)$} and \textsc{$\pi$-sub-reconf$(G, S, T, k, \ell)$}
can be solved in $\Oh^*(2^\ell)$ time, and hence are in FPT when parameterized by $\ell$.
\end{obs}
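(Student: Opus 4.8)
The plan is to exploit the tightness of the budget: when $\ell = |S_D| + |T_A|$, a reconfiguration sequence has essentially no freedom in \emph{which} vertices it touches or \emph{how often}. The first step is to observe that every reconfiguration sequence from $S$ to $T$ of length at most $\ell$ in fact has length exactly $\ell$, touches each vertex of $S_D \cup T_A$ exactly once, and touches no vertex of $C \cup O$ at all. Indeed, each reconfiguration step adds or deletes a single vertex, so a sequence of length $m$ consists of exactly $m$ vertex-touches (counted with multiplicity); since every vertex of $S_D$ must be deleted at least once and every vertex of $T_A$ must be added at least once, and $|S_D \cup T_A| = |S_D| + |T_A| = \ell$, we get $m \ge \ell$, hence $m = \ell$, and then the count is exactly saturated by touching each vertex of $S_D \cup T_A$ once and nothing else.

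Consequently, every solution occurring along such a sequence is of the form $U(A,B) := C \cup (S_D \setminus A) \cup B$, where $A \subseteq S_D$ records the vertices already deleted and $B \subseteq T_A$ the vertices already added; note that $U(\emptyset,\emptyset) = S$, $U(S_D, T_A) = T$, and the map $(A,B) \mapsto U(A,B)$ is injective. I would therefore form the auxiliary directed acyclic graph $H$ on the $2^{|S_D|}\cdot 2^{|T_A|} = 2^{\ell}$ states $(A,B)$ with $A \subseteq S_D$ and $B \subseteq T_A$, placing an arc from $(A,B)$ to $(A',B')$ whenever $A \subseteq A'$, $B \subseteq B'$, and $|A'|+|B'| = |A|+|B|+1$ (so that consecutive states differ by exactly one addition or deletion). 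Call a state \emph{feasible} if $U(A,B)$ is a feasible solution of the relevant reconfiguration graph --- i.e.\ $|U(A,B)| \le k$ and $G[V(G)\setminus U(A,B)] \in \pi$ for \textsc{$\pi$-del-reconf}, or $|U(A,B)| \ge k$ and $G[U(A,B)] \in \pi$ for \textsc{$\pi$-sub-reconf} --- a test that runs in polynomial time because $\pi$ is polynomially decidable. By the first step, the instance is a yes-instance if and only if $H$ contains a directed path from $(\emptyset,\emptyset)$ to $(S_D,T_A)$ passing only through feasible states.

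The final step is a routine reachability computation over $H$, done by dynamic programming on the $2^\ell$ states in nondecreasing order of $|A|+|B|$: the state $(\emptyset,\emptyset)$ is reachable iff feasible (which it is, as $S$ is a solution), and any other state $(A,B)$ is reachable iff it is feasible and at least one of its at most $\ell$ in-neighbours is reachable. Each state is handled in time polynomial in $n$, so the whole procedure runs in $\Oh^*(2^\ell)$ time, which is of the form $f(\ell)\cdot\mathrm{poly}(n)$; hence both problems lie in FPT when parameterized by $\ell$. The only delicate point is the argument in the first step --- that the budget is too tight to touch any vertex twice, or to touch a vertex outside $S_D \cup T_A$ --- after which everything reduces to a standard subset dynamic program over the $2^\ell$ feasible configurations.
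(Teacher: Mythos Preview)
Your proof is correct and follows essentially the same approach as the paper: both argue that the tight budget forces every intermediate solution to have the form $C \cup B$ for some $B \subseteq S_D \cup T_A$, yielding at most $2^\ell$ candidates, and then solve the instance by a graph search over these candidates. The only cosmetic difference is that you make the monotonicity explicit and organise the search as reachability in a DAG layered by $|A|+|B|$, whereas the paper simply runs shortest-path search in the subgraph of the reconfiguration graph induced on the $2^\ell$ relevant nodes; the running-time bound and the underlying idea are the same.
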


\begin{proof}
Since each vertex in $T_A$ must be added and each vertex in $S_D$ deleted,
in $\ell$ steps we can touch each vertex in $S_D \cup T_A$ exactly once; all
vertices in $V(G) \setminus (S_D \cup T_A)$ remain untouched.

Any node in the path between $S$ and $T$ in $R^{\pi}_{\textsc{sub}}(G,k)$
represents a set $C \cup B$ where $B$ is a subset of $S_D \cup
T_A$. As $|S_D| + |T_A| = \ell$, there are only $2^\ell$ choices for $B$.
Our problem then reduces to finding the shortest path between $S$ and
$T$ in the subgraph of $R^{\pi}_{\textsc{sub}}(G,k)$ induced on the $2^\ell$ relevant
nodes; the bound follows from the fact that the number of edges is at
most $2^{\ell}|V(G)|$, a consequence of Fact~\ref{fact-degree-bound}.
The same argument holds for $R^{\pi}_{\textsc{del}}(G,k)$.\qed
\end{proof}
In contrast, we show in the next section that for most hereditary properties,  reconfiguration
problems are hard when parameterized by $\ell$.

\subsection{Bounded Hitting Set}

 Here, we prove the parameterized tractability of reconfiguration for
 certain superset-closed $k$-subset problems when parameterized by
 $k$, where a {\em $k$-subset problem} is a parameterized problem $\sc
 Q$ whose solutions for an instance $(I,k)$ are all subsets of size at
 most $k$ of a domain set, and is {\em superset-closed} if any
 superset of size at most $k$ of a solution of $\sc Q$ is also a
 solution of $\sc Q$. For example, parameterized \textsc{Vertex Cover}
 is a superset-closed problem.


\begin{theorem}
\label{fullkernelresult}
If a $k$-subset problem $\sc Q$ is superset-closed and has an FPT algorithm to
enumerate all its minimal solutions, the number of which is bounded by a function of
$k$, then {\sc $Q$ Reconfiguration} parameterized by $k$ is in FPT,
as well as the search and shortest path variants.
\end{theorem}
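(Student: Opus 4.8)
The plan is to construct a reconfiguration kernel in the sense defined above, and then invoke the fact that any reconfiguration problem admitting such a kernel lies in FPT. Fix an instance $(I, S, T, k, \ell)$ of {\sc $Q$ Reconfiguration}. Let $\mc{M}$ be the collection of all minimal solutions of $\sc Q$ on instance $I$; by hypothesis $\mc{M}$ can be enumerated in FPT time and $|\mc{M}| \le g(k)$ for some function $g$. The key structural observation is that, because $\sc Q$ is superset-closed, a set $U$ of size at most $k$ is a solution if and only if $U$ contains some member of $\mc{M}$. So a reconfiguration step (adding or deleting one element) stays feasible precisely as long as the current set continues to contain at least one minimal solution. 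The essential elements of the domain are therefore exactly those appearing in $\bigcup \mc{M}$, together with the elements already forced to be present by $S$ and $T$; everything else is ``interchangeable filler.''

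First I would define the relevant domain as $D = (\bigcup \mc{M}) \cup S \cup T$, which has size at most $k \cdot g(k) + 2k$, a function of $k$ alone. I would argue that in any reconfiguration sequence from $S$ to $T$, the elements outside $D$ play a purely structural role: at any node of the reconfiguration graph the current set is $A \cup F$ where $A \subseteq D$ determines feasibility (via whether $A$ still contains a member of $\mc{M}$) and $F$ is an arbitrary collection of ``filler'' elements of $V \setminus D$ used only to pad the set toward size $k$. Since fillers can be freely added or removed without affecting feasibility, the reachability question, and the shortest-path question, reduce to a question about how the $D$-part of the solution evolves, with the filler count tracked as a single integer between $0$ and $k - |A|$. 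This lets me replace all of $V \setminus D$ by a pool of $k$ generic filler elements, producing an equivalent instance whose domain size is bounded by a function of $k$ --- a single-instance reconfiguration kernel (so $h(k) = 1$).

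Next, on this bounded instance, the reconfiguration graph $R_Q$ has size bounded by a function of $k$: there are at most $2^{|D|+k}$ nodes, each adjacency is checkable in polynomial time, and feasibility of a node is decidable by scanning $\mc{M}$. So I would simply run breadth-first search from $S$ in this bounded graph; existence of a path answers the decision version, the BFS tree yields an explicit reconfiguration sequence for the search variant, and a shortest BFS path (lifted back by reinserting concrete filler elements for the generic ones) yields the shortest-path variant. The total running time is (FPT time to enumerate $\mc{M}$) plus (a function of $k$ for the BFS), hence FPT.

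The main obstacle I anticipate is making the ``filler elements are freely interchangeable'' reduction fully rigorous --- in particular, verifying that compressing $V \setminus D$ to $k$ generic elements neither creates new reconfiguration paths nor destroys existing ones, and that the length of a shortest path is preserved exactly (so that the shortest-path variant is handled, not merely the decision variant). This requires a careful projection argument: map each node of the original reconfiguration graph to the node of the kernel obtained by keeping its $D$-part and recording $\min(|F|, k - |A|)$ fillers, show this map sends edges to edges or to loops, and show every edge of the kernel lifts back. One subtlety worth checking is the interaction with the size bound $k$: adding a filler is only a legal step when the current set has size $< k$, and I must ensure the generic-filler instance reproduces exactly the same slack. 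Beyond that, the remaining steps are routine.
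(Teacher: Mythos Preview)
Your proposal is correct but follows a different route from the paper. The paper restricts attention to $M = \bigcup \mc{M}$ alone (not $M \cup S \cup T$), projects the original sequence to $U \mapsto U \cap M$, and then must reconcile the fact that $S$ and $T$ themselves may contain elements outside $M$: elements of $(S \cap T)\setminus M$ can either be carried along throughout (consuming one unit of capacity each) or removed and re-added (costing two steps each). This tradeoff is why the paper produces $\mathcal{E}+1$ reduced instances with varying $(k,\ell)$, matching their multi-instance definition of a reconfiguration kernel. Your approach sidesteps this bookkeeping by enlarging the retained domain to $D = M \cup S \cup T$, so that $S$ and $T$ survive the reduction intact and a single kernel instance suffices.

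Two remarks. First, your anticipated ``main obstacle'' about fillers largely dissolves: since $S,T \subseteq D$ and adding an element outside $M$ never helps feasibility (the set must still contain some member of $\mc{M} \subseteq D$) while only consuming capacity, any reconfiguration sequence can be projected to $U \mapsto U \cap D$ without increasing length; hence a shortest path already stays inside $D$ and the generic fillers are unnecessary. This makes your single-instance reduction cleaner than you expected. Second, the paper's more aggressive restriction to $M$ pays off later: because $M$ is exactly the union of minimal solutions, it aligns with the Damaschke--Molokov ``full kernel'' for {\sc Bounded Hitting Set}, which is what lets the paper upgrade the FPT result to a polynomial reconfiguration kernel in Theorem~\ref{hittingsettheorem}. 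Your domain $D$, which also carries $S$ and $T$, would work just as well for that application since $|S|,|T| \le k$, but the paper's framing makes the connection to existing kernelization results more direct.
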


\begin{proof}
  By enumerating all minimal
  solutions of $\sc Q$,
  we compute the set $M$ of all elements $v$ of the domain set such
  that $v$ is in a minimal solution to $\sc Q$.
  For $(I,S,T,k,\ell)$ an instance of
  $\sc Q$ \textsc{Reconfiguration},  we show that there exists a reconfiguration
  sequence from $S$ to $T$ if and only if there exists a reconfiguration
  sequence from $S\cap M$ to $T\cap M$ that uses only
  subsets of $M$.

Each set $U$ in the reconfiguration sequence
  from $S$ to $T$ is a solution, 
   hence
  contains at least one minimal solution in $U\cap M$;
 $U \cap M$ is a superset of the minimal solution and hence also a solution.
  Moreover, since
  any two consecutive solutions $U$ and $U'$ in the sequence differ by a single
  element, $U \cap M$ and $U' \cap M$ differ by at most a single element.
  By replacing each subsequence of identical sets by a single set,
  we obtain a reconfiguration sequence from $S \cap M$ to
  $T \cap M$ that uses only subsets of $M$.

  The reconfiguration sequence from $S\cap M$ to $T\cap M$ using only
  subsets of $M$ can be extended to a reconfiguration sequence from
  $S$ to $T$ by transforming $S$ to $S\cap M$ in $|S \setminus M|$
  steps and transforming $T\cap M$ to $T$ in $|T \setminus M|$
  steps. 
  In this
  sequence, each vertex in $C \setminus M$ is removed from $S$ to form
  $S \setminus M$ and then readded to form $T$ from $T \setminus M$.
  For each vertex $v \in C \setminus M$, we can choose instead to add $v$
  to each solution in the sequence, thereby decreasing $\ell$ by two
  (the steps needed to remove and then readd $v$) at the cost of
  increasing by one the capacity used in the sequence from $S \cap M$
  to $T \cap M$.
  This choice can be made independently for each
  of these ${\cal E} = |C\setminus M|$ vertices.

  Consequently, $(I,S,T,k,\ell)$ is a yes-instance for $\sc Q$
  \textsc{Reconfiguration} if and only if one of the ${\cal E} +1$
  reduced instances $(I, S\cap M, T\cap M, k-e, \ell-2({\cal E} -
  e))$, for $0 \le e \le {\cal E}$ and ${\cal E} = |C \backslash M|$, is a
  yes-instance for $\sc Q'$ \textsc{Reconfiguration}: we define $\sc Q'$
  as a $k$-subset problem whose solutions for an instance $(I,k)$
   are solutions of instance $(I,k)$ of $\sc Q$ that are contained in $M$.
  To show that $\sc Q'$
  \textsc{Reconfiguration} is in {\em FPT}, we observe that the number of nodes in the reconfiguration graph for $\sc Q'$ is bounded by a function of $k$:
  each solution of $\sc Q'$
  is a subset of $M$, yielding at most $2^{|M|}$ nodes,
  and $|M|$ is bounded by a function of $k$. \qed
\end{proof}

As a consequence, {\sc Bounded Hitting Set Reconfiguration}, {\sc Feedback Arc Set Reconfiguration in Tournaments Reconfiguration}, and  {\sc Minimum Weight SAT in Bounded CNF Formulas Reconfiguation} (where each solution is the set of variables that are set to true in a satisfying assignment, and the problem looks for a solution of cardinality at most $k$) are proved to be in FPT when parameterized by $k$:

\begin{corollary}
{\sc Bounded Hitting Set Reconfiguration},
{\sc Feedback Arc Set in Tournaments Reconfiguration}, and {\sc Minimum Weight SAT in
  Bounded CNF Formulas Reconfiguation} parameterized by $k$ are in FPT.
\end{corollary}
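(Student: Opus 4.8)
The plan is to derive the corollary from Theorem~\ref{fullkernelresult} by checking, for each of the three problems, that it falls under its hypotheses: that it is a $k$-subset problem over a natural ground set, that it is superset-closed, and that it has an FPT algorithm enumerating all of its minimal solutions of size at most $k$, whose number is bounded by a function of $k$. The first two conditions are routine in each case; the third is supplied by a \emph{full kernel} --- a polynomial-time computable subset $M$ of the ground set containing every element of every minimal solution of size at most $k$ --- after which the at most $(|M|+1)^{k}$ subsets of $M$ of size at most $k$ can be listed and filtered for minimality in FPT time, giving the enumeration algorithm required by the theorem (and hence, also by the theorem, FPT algorithms for the search and shortest path variants).

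First I would treat \textsc{Bounded Hitting Set}: the ground set is the universe, a reconfiguration step adds or deletes a single element, and a set of size at most $k$ is a solution exactly when it meets every input set; adding elements preserves this, so it is a superset-closed $k$-subset problem. Since every input set has cardinality at most some constant $d$, Damaschke and Molokov's full kernel for $d$-\textsc{Hitting Set}~\cite{D09}, of size $O(k^{d})$, contains all elements of all minimal hitting sets of size at most $k$, which yields the enumeration. \textsc{Minimum Weight SAT in Bounded CNF Formulas} is, in the monotone case --- the one in which the family of true-sets of satisfying assignments of bounded size is superset-closed --- exactly $d$-\textsc{Hitting Set}, with the variables as the universe and each clause read as the set of variables it contains, so the same full kernel applies verbatim.

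For \textsc{Feedback Arc Set in Tournaments} I would take the ground set to be the arc set, with a reconfiguration step adding or deleting one arc and a set of at most $k$ arcs being a solution when its deletion leaves an acyclic digraph; deleting more arcs cannot create a cycle, so this is again a superset-closed $k$-subset problem. For the enumeration I would use the classical link to $3$-\textsc{Hitting Set}: a tournament is acyclic iff it has no directed triangle, so every feedback arc set is a transversal of the hypergraph whose hyperedges are the directed triangles; from this one extracts a polynomial-size set $M$ of arcs that meets every minimal feedback arc set of size at most $k$ (via \cite{D09} applied to this $3$-\textsc{Hitting Set} instance, or via a full kernel built directly for \textsc{Feedback Arc Set in Tournaments}), and then enumerates and filters as before.

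The step I expect to be the main obstacle is precisely this last correspondence. A minimal feedback arc set need not be a \emph{minimal} transversal of the directed-triangle hypergraph --- deleting a minimal triangle-transversal can leave a longer directed cycle intact --- so one cannot simply quote the $3$-\textsc{Hitting Set} full kernel; the argument has to go through either a full-kernel construction tailored to \textsc{Feedback Arc Set in Tournaments} or a direct proof that every arc of a minimal feedback arc set of size at most $k$ lies on a directed triangle that forces its selection, which simultaneously locates these sets inside a small $M$ and bounds their number by a function of $k$. A lesser but necessary point is to fix the precise reading of \textsc{Minimum Weight SAT in Bounded CNF Formulas} (the monotone one above) under which its set of solutions is genuinely superset-closed, so that Theorem~\ref{fullkernelresult} indeed applies.
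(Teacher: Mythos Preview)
For \textsc{Bounded Hitting Set} and (the monotone reading of) \textsc{Minimum Weight SAT} your full-kernel route via~\cite{D09} is correct and essentially equivalent to what the paper does; the paper instead enumerates minimal solutions by a bounded-depth search tree (branch on the at most $c$ elements of an unhit set, respectively on the variables of an all-positive clause, to depth $k$), but this is a cosmetic difference. Your caveat about superset-closedness of \textsc{Minimum Weight SAT} is apt.

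The substantive gap is precisely where you place it, in \textsc{Feedback Arc Set in Tournaments}: you correctly observe that a minimal feedback arc set need not be a minimal triangle transversal, so the $3$-\textsc{Hitting Set} full kernel does not directly deliver the set $M$, and you do not actually carry out either of the fixes you sketch. The paper's resolution is a different idea that sidesteps the kernel entirely. It uses the characterization from~\cite{Raman2006} that $F$ is a \emph{minimal} feedback arc set of a tournament if and only if \emph{reversing} (rather than deleting) the arcs of $F$ yields an acyclic tournament. Because reversal keeps every intermediate instance a tournament, one can branch: pick a directed triangle, reverse one of its three arcs, and recurse to depth $k$. Any minimal feedback arc set $F$ of size at most $k$ must contain an arc of the chosen triangle (otherwise that triangle survives all reversals), and after reversing that arc the rest of $F$ is a minimal feedback arc set of the new tournament; by induction every such $F$ appears at a leaf. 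Hence there are at most $3^{k}$ minimal solutions and they are enumerated in $O^{*}(3^{k})$ time, which supplies the missing hypothesis of Theorem~\ref{fullkernelresult}.
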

\begin{proof}
All these problems are superset-closed. Furthermore, standard techniques give FPT algorithms to enumerate their minimal solutions, 
 and the number of minimal solutions is bounded by a function of $k$ in all cases, as required by Theorem~\ref{fullkernelresult}.
We include the proofs for completeness.

We can devise a search tree algorithm that gradually constructs
minimal hitting sets of instances of {\sc Bounded Hitting Set},
producing all minimal hitting sets of size at most $k$ in its
leaves. Consider an instance of {\sc Bounded Hitting Set}, where the
cardinality of each set is bounded by a constant $c$. At each non-leaf
node, the algorithm chooses a set that is not hit, and branches on all
possible ways of hitting this set, including one of the (at most $c$)
elements in the set in each branch. Since we are not interested in
hitting sets of cardinality more than $k$, we do not need to search
beyond depth $k$ in the tree, proving an upper bound of $c^k$ on the number
of leaves, and an upper bound of $O^*(c^k)$ on the enumeration time.

For the problem {\sc Feedback Arc Set in Tournaments}, a tournament is
acyclic if and only if it has a directed cycle of length
three~\cite{Bang2008}, and a set of arcs is a minimal feedback arc set
in a tournament if and only if reversing its arcs in the tournament
results in an acyclic tournament~\cite{Raman2006}. Therefore, at each
non-leaf node in a search tree for this problem, there is always a
cycle $C$ of length three and every feedback arc set shares at least one
arc with $C$. The algorithm can thus branch on the three arcs
in $C$, reversing one in each branch, and solve the problem
recursively. As in the previous algorithm,
since we are not interested in feedback arc sets
of cardinality more than $k$, the search can be terminated at depth
$k$, proving an upper bound of $3^k$ on the number of minimal
$k$-feedback arc sets in tournaments, and an upper bound of $O^*(3^k)$
on the running time of this enumeration algorithm.

Finally, Misra et al.~\cite{Misra2013} give a search tree algorithm for
bounded CNF formula instances of {\sc Minimum Weight SAT}, where every
clause has at most $c$ literals for some constant $c$. At each node, the
algorithm chooses a clause whose literals are all positive, and
branches on all possible ways of satisfying the clause, setting one
variable to true in each branch. If there is no such clause, the
formula is satisfied with no increase in the number of true variables,
by setting every non-assigned variable to false. As before,
the algorithm stops the search when it reaches a depth of $k$,
proving an upper bound of $c^k$ on the number of satisfying
assignments, and an upper bound of $O^*(c^k)$ on the enumeration
time. \qed
\end{proof}

For {\sc Bounded Hitting Set}, the proof of
Theorem~\ref{fullkernelresult} can be strengthened to develop a
polynomial reconfiguration
kernel. In fact, we use the ideas in Theorem~\ref{fullkernelresult} to
adapt a special kernel that retains all minimal $k$-hitting sets in the reduced
instances~\cite{D09}.


\begin{theorem}
\label{hittingsettheorem}
{\sc Bounded Hitting Set Reconfiguration} parameterized by $k$ has a
polynomial reconfiguration kernel.
\end{theorem}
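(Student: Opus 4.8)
The plan is to combine the structural reduction from the proof of Theorem~\ref{fullkernelresult} with the existing kernelization result of Damaschke and Molokov~\cite{D09} for {\sc Bounded Hitting Set}. Recall that in the proof of Theorem~\ref{fullkernelresult} we reduced an instance $(I,S,T,k,\ell)$ of {\sc $Q$ Reconfiguration} to a constant-function-many collection of instances $(I, S\cap M, T\cap M, k-e, \ell-2(\mathcal{E}-e))$ of $\sc Q'$ \textsc{Reconfiguration}, where $M$ is the set of all domain elements appearing in some minimal $k$-hitting set, and $\sc Q'$ restricts solutions to subsets of $M$. For {\sc Bounded Hitting Set} the key additional fact is that Damaschke and Molokov give a polynomial-time procedure producing a reduced set system on a ground set of size polynomial in $k$ (in fact $O(k^c)$ for sets of size $\le c$) that retains \emph{all} minimal hitting sets of size at most $k$; this is exactly what we need, since the reduction above only refers to $M$, the union of those minimal solutions.

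First I would invoke the Damaschke--Molokov ``full kernel'' on the input set system $\Fh$ with parameter $k$, obtaining in polynomial time a sub-system $\Fh'$ on a ground set $U'$ with $|U'| = \mathrm{poly}(k)$, such that every minimal hitting set of $\Fh$ of size $\le k$ is a hitting set of $\Fh'$ and conversely every minimal hitting set of $\Fh'$ of size $\le k$ is a minimal hitting set of $\Fh$; in particular $M \subseteq U'$ and $M$ is exactly the set of elements of $U'$ lying in some minimal $k$-hitting set of $\Fh'$, which can be computed in FPT-time (indeed polynomial time once $\Fh'$ has polynomial size, by enumerating the at most $c^k$ minimal hitting sets). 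Then, following the proof of Theorem~\ref{fullkernelresult} verbatim, I would output the $\mathcal{E}+1$ instances $(\Fh', S\cap M, T\cap M, k-e, \ell - 2(\mathcal{E}-e))$ for $0 \le e \le \mathcal{E} = |C\setminus M|$, together with the bookkeeping needed to interpret a solution of any one of them as a reconfiguration sequence for the original instance. Each such instance has ground set $U'$ of size $\mathrm{poly}(k)$, the parameters $k-e$ and $\ell-2(\mathcal{E}-e)$ are computable in polynomial time and bounded appropriately, and by the correctness argument of Theorem~\ref{fullkernelresult} the original instance is a yes-instance iff at least one output instance is. This matches the definition of a reconfiguration kernel with $h(k) = \mathcal{E}+1 \le k+1$ instances, each of size $j(k) = \mathrm{poly}(k)$, so the kernel is polynomial in $k$.

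The one point requiring care — and the main obstacle — is ensuring that the correctness argument of Theorem~\ref{fullkernelresult} survives the passage from $\Fh$ to $\Fh'$, i.e. that working inside the kernelized system $\Fh'$ does not lose any reconfiguration sequence. The argument in Theorem~\ref{fullkernelresult} only uses two properties: superset-closedness (so that $U\cap M$ is again a solution whenever $U$ is), and the fact that $M$ captures every minimal solution (so that $U\cap M$ is nonempty-enough to be a solution). For $\Fh'$, superset-closedness is immediate, and the ``full kernel'' property guarantees precisely that the minimal $k$-hitting sets of $\Fh'$ are the same as those of $\Fh$, hence the same $M$ works and a set $U\subseteq U'$ hits $\Fh'$ iff it contains one of these common minimal hitting sets. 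Thus every step of the reduction and every step of the correctness proof carries over with $\Fh'$ in place of $\Fh$; I would spell this out by checking that a reconfiguration sequence in $R_{\sc Q'}$ over ground set $U'$ is, after reinserting the untouched elements of $C\setminus M$ as in Theorem~\ref{fullkernelresult}, a valid reconfiguration sequence for $(\Fh,S,T,k,\ell)$, and conversely. This completes the construction of the polynomial reconfiguration kernel.
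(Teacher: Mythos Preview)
Your approach is essentially the paper's: apply the Damaschke--Molokov full kernel, then invoke the reduction from Theorem~\ref{fullkernelresult} to produce $\mathcal{E}+1 \le k+1$ reduced instances. The correctness discussion is fine.

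There is, however, one genuine slip. You restrict to $M$, the union of all minimal $k$-hitting sets, and claim that once $\Fh'$ has ground set of size $\mathrm{poly}(k)$ you can compute $M$ ``in polynomial time \dots\ by enumerating the at most $c^k$ minimal hitting sets''. Enumerating $c^k$ sets is FPT, not polynomial, so as stated your $S_i = S\cap M$, $T_i = T\cap M$, and $\mathcal{E} = |C\setminus M|$ are not computed in polynomial time, and the definition of a reconfiguration kernel is violated. The paper sidesteps this entirely: it never computes $M$, but uses the kernel's ground set $V(G')$ (your $U'$) in its place, producing instances $(G', S\cap V(G'), T\cap V(G'), k-e, \ell-2(\mathcal{E}-e))$ with $\mathcal{E} = |C\setminus V(G')|$. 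This is legitimate because the only property of $M$ used in Theorem~\ref{fullkernelresult} is that it contains every minimal $k$-solution; since $M \subseteq V(G')$, the larger set $V(G')$ has the same property, and intersecting a solution with $V(G')$ still leaves a superset of some minimal solution. Replace $M$ by $U'$ throughout and your argument becomes both correct and identical to the paper's.
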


\begin{proof}
  We let $(G,S,T,k,\ell)$ be an instance of {\sc Bounded Hitting Set
    Reconfiguration}: $G$ is a family of sets of vertices of size at
  most $r$ and each of $S$ and $T$ is a hitting set of size at most
  $k$, that is, a set of vertices intersecting each set in $G$.  We
  form a reconfiguration kernel using the reduction algorithm $\cal A$
  of Damaschke and Molokov~\cite{D09}: $G' = {\cal A}(G)$ contains all
  minimal hitting set solutions of size at most $k$, and is of size at
  most $(r-1)k^r + k$.

  {\sc Bounded Hitting Set} is a $k$-subset problem that is
  superset-closed. Moreover, $V(G')$ includes all minimal
  $k$-hitting sets, and the $k$-hitting sets for $G'$ are actually
  those $k$-hitting sets for $G$ that are completely included
  in $V(G')$.
  Therefore, as in the proof of Theorem~\ref{fullkernelresult},
  $(G,S,T,k,\ell)$ is a yes-instance for
  {\sc Bounded Hitting Set Reconfiguration} if and only if
  one of the ${\cal E} +1$ reduced instances $(G', S\cap V(G'),
  T\cap V(G'), k-e, \ell-2({\cal E} - e))$, for $0 \le e \le {\cal E}$,
  is a yes-instance for {\sc Bounded Hitting Set Reconfiguration}.

  Notice that unlike in the proof of Theorem~\ref{fullkernelresult},
  here we have access to an $f(k)$-bounded instance $G'$ based on
  which we can solve {\sc $Q'$ Reconfiguration}.  Another difference
  is that here the set containing all minimal solutions can be
  computed in polynomial time, whereas Theorem~\ref{fullkernelresult}
  guarantees only a fixed-parameter tractable procedure.  \qed
\end{proof}

{\sc Bounded Hitting Set} generalizes {\sc Vertex Cover}, {\sc
  Feedback Vertex Set in Tournaments}, {\sc Cluster Deletion}, and in
general any deletion problem for a hereditary property with a finite
forbidden set:
\begin{corollary}\label{finiteforbidencor}
If $\pi$ is a hereditary graph property with a finite forbidden set,
then \textsc{$\pi$-del-reconf$(G, S, T, k, \ell)$} parameterized by $k$ has a
polynomial reconfiguration kernel.
\end{corollary}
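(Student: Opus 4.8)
The plan is to reduce \textsc{$\pi$-del-reconf} directly to \textsc{Bounded Hitting Set Reconfiguration} and then invoke Theorem~\ref{hittingsettheorem}. Since $\pi$ is hereditary with a finite forbidden set $\Fh_\pi$, set $r = \max\{|V(H)| : H \in \Fh_\pi\}$, a constant depending only on $\pi$. Given an instance $(G,S,T,k,\ell)$ of \textsc{$\pi$-del-reconf}, I would build a family $\mathcal{H}$ of subsets of $V(G)$ by enumerating all $W \subseteq V(G)$ with $|W| \le r$ — there are $O(n^r)$ of them — and including $W$ in $\mathcal{H}$ exactly when $G[W]$ is isomorphic to some member of $\Fh_\pi$. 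Testing isomorphism against the fixed finite family $\Fh_\pi$ costs constant time per subset, so $\mathcal{H}$ is computed in polynomial time, and every set in $\mathcal{H}$ has size at most $r$; hence $(\mathcal{H},k)$ is a legitimate instance of \textsc{Bounded Hitting Set} (the bounded variant, as $r$ is constant).

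The core observation is that the forbidden–induced–subgraph characterisation of hereditary properties makes the two reconfiguration graphs literally the same. For $V' \subseteq V(G)$, $G[V(G)\setminus V'] \in \pi$ iff $V(G)\setminus V'$ induces no member of $\Fh_\pi$, iff $V'$ meets every $W \in \mathcal{H}$, i.e.\ $V'$ is a hitting set for $\mathcal{H}$. Thus the node set of $R^{\pi}_{\textsc{del}}(G,k)$ (the $V' \subseteq V(G)$ with $|V'| \le k$ and $G[V(G)\setminus V'] \in \pi$) coincides with the node set of the reconfiguration graph of \textsc{Bounded Hitting Set} on $(\mathcal{H},k)$, and the adjacency relation — addition or deletion of a single vertex — is identical. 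In particular $S$ and $T$ are hitting sets of $\mathcal{H}$ of size at most $k$, so $(\mathcal{H},S,T,k,\ell)$ is a valid instance of \textsc{Bounded Hitting Set Reconfiguration} and is a yes-instance iff $(G,S,T,k,\ell)$ is.

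It then suffices to apply Theorem~\ref{hittingsettheorem} to $(\mathcal{H},S,T,k,\ell)$: since that instance is obtained from $(G,S,T,k,\ell)$ in polynomial time and has the same answer, composing the two reductions gives a polynomial reconfiguration kernel for \textsc{$\pi$-del-reconf$(G,S,T,k,\ell)$} parameterized by $k$.

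I do not expect a genuine obstacle here. The one point that needs care is verifying that the reduction preserves not merely solvability but the entire reconfiguration graph — node set together with the single-vertex add/delete adjacency — which is immediate because the map is the identity on $V(G)$ and on $k,\ell,S,T$. It is also worth stating explicitly that finiteness of $\Fh_\pi$ is precisely what keeps $r$ constant, so that we land in the \emph{bounded} hitting set setting where Theorem~\ref{hittingsettheorem} is available.
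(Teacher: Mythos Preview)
Your proposal is correct and is precisely the argument the paper intends: the corollary is stated without proof, immediately after the remark that \textsc{Bounded Hitting Set} ``generalizes \ldots\ any deletion problem for a hereditary property with a finite forbidden set,'' so the paper relies on exactly the reduction you spell out (enumerate all induced copies of forbidden graphs as the sets to be hit, observe that the reconfiguration graphs coincide, and apply Theorem~\ref{hittingsettheorem}). Your write-up simply makes explicit what the paper leaves implicit.
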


\subsection{Undirected Feedback Vertex Set}

Corollary~\ref{finiteforbidencor} does not apply to {\sc Feedback
  Vertex Set}, for which the associated hereditary graph property is
the collection of all  forests; the forbidden set is
the set of all cycles and hence is not finite.  Indeed,
Theorem~\ref{fullkernelresult} does not apply to {\sc Feedback Vertex
  Set} either, since the number of minimal solutions exceeds $f(k)$ if
the input graph includes a cycle of length $f(k)+1$, for any function
$f$.  While it maybe possible to adapt the compact enumeration of
minimal feedback vertex sets~\cite{Guoetal2006} for reconfiguration,
we develop a reconfiguration kernel for feedback vertex set by
modifying a specific kernel for the problem.

We are given an undirected graph and two feedback vertex sets $S$ and $T$ of size at most $k$.
We make use of Bodlaender's cubic kernel for {\sc Feedback Vertex
  Set}~\cite{B07}, modifying reduction rules (shown in italics) to
allow the reconfiguration sequence to use non-minimal solutions, and
to take into account the roles of $C$, $S_D$, $T_A$, and $O$.  In some
cases we remove vertices from $O$ only, as others may be needed in a
reconfiguration sequence.

The reduction may introduce multiple edges, forming a
multigraph. Bodlaender specifies that a double edge between vertices
$u$ and $v$ consists of two edges with $u$ and $v$ as endpoints.
Since we preserve certain degree-two vertices, we extend the notion by
saying that there is a {\em double edge} between $u$ and $v$ if either
there are two edges with $u$ and $v$ as endpoints, one edge between
$u$ and $v$ and one path from $u$ to $v$ in which each internal vertex
is of degree two, or two paths (necessarily sharing only $u$ and $v$)
from $u$ to $v$ in which each internal vertex is of degree two. Following
Bodlaender, we define two sets of vertices, a feedback vertex set $A$
of size at most $2k$ and the set $B$ containing each vertex with a
double edge to at least one vertex in $A$.  A {\em piece} is a
connected component of $G[V \setminus (A \cup B)]$, the {\em border}
of a piece with vertex set $X$ is the set of vertices in $A \cup B$
adjacent to any vertex in $X$, and a vertex $v$ in the border {\em
  governs} a piece if there is a double edge between $v$ and each
other vertex in the border.  We introduce ${\cal E}$ to denote how
much capacity we can ``free up'' for use in the reduced instance by
removing vertices and then readding them.

Bodlaender's algorithm makes use of a repeated initialization phase in
which an approximate solution $A$ is found and $B$ is initialized; for
our purposes, we set $A = C \cup S_D \cup T_A$ in the first round and
thereafter remove vertices as dictated by the application of reduction
rules.  Although not strictly necessary, we preserve this idea in
order to be able to apply Bodlaender's counting arguments.  In the following rules, $v$, $w$, and $x$ are vertices.

\begin{description}
\item[Rule 1] If $v$ has degree 0, remove $v$ from $G$. {\em If $v$ is in $S_D \cup T_A$, subtract 1 from $\ell$.  If $v$ is in $C$, increment ${\cal E}$ by 1.} \item[Rule 2] If $v$ has degree 1, remove $v$ and its incident edge from $G$.
{\em If $v$ is in $S_D \cup T_A$, subtract 1 from $\ell$.  If $v$ is in $C$, increment ${\cal E}$ by 1.}
\item[Rule 3] If there are three or more edges $\{v,w\}$, remove all but two.
\item[Rule 4] If $v$ has degree 2 {\em and $v$ is in $O$}, remove $v$
  and its incident edges from $G$ and add an edge between its
  neighbours $w$ and $x$; add $w$ (respectively, $x$) to $B$ if a
  double edge is formed, $w$ (respectively, $x$) is not in $A \cup B$,
  and $x$ (respectively, $w$) is in $A$.
\item[Rule 5] If $v$ has a self-loop, remove $v$ and all incident edges and decrease $k$ by 1, then restart the initialization phase.
\item[Rule 6]  If there are at least $k+2$ vertex-disjoint paths between $v \in A$ and any $w$ and there is no double edge between $v$ and $w$, add two edges between $v$ and $w$, and if $w \notin A \cup B$, add $w$ to $B$.
\item[Rule 7] If for $v \in A$ there exist at least $k+1$ cycles such that each pair of cycles has exactly $\{v\}$ as the intersection, remove $v$ and all incident edges and decrease $k$ by 1, then restart the initialization phase.
\item[Rule 8] If $v$ has at least $k+1$ neighbours with double edges, remove $v$ and all incident edges and decrease $k$ by 1, then restart the initialization phase.
\item[Rule 9] If $v \in A \cup B$ governs a piece with vertex set $X$ and has exactly one neighbour $w$ in $X$, then remove the edge $\{v,w\}$.
\item[Rule 10] If $v \in A \cup B$ governs a piece with vertex set $X$ and has at least two neighbours in $X$, then remove $v$ and all incident edges and decrease $k$ by 1, then restart the initialization phase. {\em Replaced by the following rule: If a piece with vertex set $X$ has a border set $Y$ such that there is a double edge between each pair of vertices in $Y$, remove $X$.}
\end{description}

\begin{lemma}\label{lemma-fvs-reconf}
The instance $(G,S,T,k,\ell)$ is a yes-instance if and only if one of
the ${\cal E} +1$ reduced instances $(G', S', T', k-e, \ell-2({\cal E}
- e))$, for $0 \le e \le {\cal E}$, is a yes-instance.
\end{lemma}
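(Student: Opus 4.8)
The plan is to adapt the strategy used for Theorems~\ref{fullkernelresult} and~\ref{hittingsettheorem}: I would show that each modified reduction Rule~1--10 is \emph{safe} for reconfiguration, and chain these equivalences together. Throughout the reduction the quantities $k$, $\ell$, and ${\cal E}$ are maintained as the rules prescribe, starting from the input $k$ and $\ell$ with ${\cal E}=0$; write $(G_i,S_i,T_i,k_i,\ell_i,{\cal E}_i)$ for their values after $i$ rule applications and $(G',S',T',k,\ell,{\cal E})$ for the final values. The invariant to maintain is: \emph{the original instance is a yes-instance if and only if, for some $e$ with $0\le e\le{\cal E}_i$, the instance $(G_i,S_i,T_i,k_i-e,\ell_i-2({\cal E}_i-e))$ is a yes-instance.} This holds trivially before any reduction (${\cal E}_0=0$), and the lemma is exactly its statement for the final $i$, so it suffices to check that each rule preserves it. As in Theorem~\ref{fullkernelresult}, the ${\cal E}_i$ vertices play the role of the ``irrelevant'' common vertices $C\setminus M$, for which we may independently decide whether to keep them in every solution along the sequence.

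For the rules that delete a vertex $v$ and decrement $k$ (Rules~5, 7, 8) I would first show $v$ lies in \emph{every} feedback vertex set of $G_i$ of size at most $k_i$: a self-loop at $v$ is a cycle of length one; the $k_i+1$ cycles through $v$ that pairwise meet only in $v$ (Rule~7) force $k_i+1$ distinct vertices into any such set avoiding $v$; and the $k_i+1$ double-edge neighbours of $v$ (Rule~8) yield $k_i+1$ subgraphs each containing a cycle through $v$ and pairwise meeting only in $v$ -- here one uses that the internal vertices of a double edge have degree two -- so again $k_i+1$ distinct vertices are needed. Hence $v\in S_i\cap T_i$ and $v$ is untouched in every reconfiguration sequence of the current instance, so for every $\kappa\le k_i$ and every $\lambda$ the map $U\mapsto U\setminus\{v\}$ gives $(G_i,S_i,T_i,\kappa,\lambda)$ a yes-instance iff $(G_i-v,S_i\setminus\{v\},T_i\setminus\{v\},\kappa-1,\lambda)$ is; substituting this into each of the ${\cal E}_i+1$ instances of the invariant (leaving $\ell,{\cal E}$ fixed) yields the invariant for the reduced instance. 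For Rules~1 and~2, deleting a degree-$\le 1$ vertex $v$ -- which lies on no cycle, so $U$ is a feedback vertex set of $G_i$ iff $U\setminus\{v\}$ is one of $G_i-v$, and adding or deleting $v$ preserves feedback-vertex-set-ness (the complement stays a forest) -- I would argue: if $v\in S_D\cup T_A$, projecting out $v$ and collapsing repeated sets removes exactly one step, matching ``subtract $1$ from $\ell$''; if $v\in C$, a sequence of $(G_i-v,S_i\setminus\{v\},T_i\setminus\{v\},\kappa,\lambda)$ lifts to the current instance either by adding $v$ to every set (one more unit of capacity, same length) or by prepending $S_i\to S_i\setminus\{v\}$ and appending $T_i\setminus\{v\}\to T_i$ (two more steps, same capacity), and conversely any sequence of the current instance restricts to one witnessing one of these two cases; a short computation shows this replaces ${\cal E}_i$ by ${\cal E}_i+1$ and leaves $k_i,\ell_i$ fixed -- exactly ``increment ${\cal E}$ by $1$''.

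The remaining rules leave $k$, $\ell$, ${\cal E}$ unchanged, so I only need them to preserve $S$--$T$ reconfigurability. Rules~3, 6, and~9 modify only edges and in fact preserve the set of all feedback vertex sets of size at most $k_i$ exactly: deleting surplus parallel edges changes no cycle-hitting requirement; adding a double edge between $v$ and $w$ when there are $k_i+2$ internally disjoint $v$--$w$ paths is safe since every size-$\le k_i$ feedback vertex set already contains $v$ or $w$ (else two paths survive and close a cycle); and deleting the one edge $\{v,w\}$ when $v$ governs the piece containing $w$ is safe by rerouting through the double edges from $v$ to the other border vertices -- these are Bodlaender's arguments~\cite{B07}, untouched by reconfiguration. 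Rule~4 (contracting a degree-two vertex $v\in O$) and Rule~10 in its modified form (deleting a piece $X$, necessarily $X\subseteq O$ since a piece lies in $V\setminus(A\cup B)\subseteq V\setminus(S\cup T)=O$ in the first round, whose border carries all pairwise double edges) genuinely delete vertices; for these I would argue directly on sequences: any reconfiguration sequence in $G_i$ can be rewritten into one never using a node that contains a deleted vertex, by locally replacing, in any such node $U$, the deleted degree-two vertex by one of its two neighbours (respectively, $U\cap X$ by a bounded set of border vertices), using Fact~\ref{fact-pieces} (so $G_i[O]$ is a forest) to see the replacement is again a feedback vertex set of no larger size, and choosing it so consecutive nodes stay adjacent; once no node meets the deleted set the sequence lives in $G_{i+1}$, and conversely every sequence in $G_{i+1}$ already uses only feedback vertex sets of $G_i$. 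Finally, the restarts of the initialization phase triggered by Rules~5, 7, 8 only recompute the auxiliary sets $A$ and $B$ and leave $(G_i,S_i,T_i,k_i,\ell_i,{\cal E}_i)$ fixed, so the invariant survives them trivially; they matter only for Bodlaender's size and termination analysis, which is inherited.

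I expect the main obstacle to be Rule~4 (and likewise the modified Rule~10): unlike in Theorem~\ref{fullkernelresult}, the deleted vertices are not \emph{a priori} excluded from every feasible solution -- a reconfiguration sequence may legitimately pass through feedback vertex sets that contain them -- so one cannot simply intersect every solution with $V(G_{i+1})$. The crux is the local rewriting that detours a sequence around such solutions without breaking feasibility, the size bound, or adjacency of consecutive solutions; getting the substituted vertices right, and in particular which border vertex to use so that adjacency is preserved, is the delicate point, and it is there that the forest structure of $G[O]$ and the ``governing''/double-edge hypotheses of the rules do the real work.
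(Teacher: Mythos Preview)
Your proposal is correct and takes essentially the same approach as the paper: verify rule by rule that each modified reduction is safe for reconfiguration, with Rules~5, 7, 8 removing vertices that lie in every feasible solution (the paper's $C_F$), Rules~1 and~2 handling the three cases $v\in O$, $v\in S_D\cup T_A$, $v\in C$ exactly as you describe, Rules~3, 6, 9 leaving the family of feedback vertex sets unchanged, and Rules~4 and~10 requiring the more delicate argument that vertices of $O$ can be avoided. Your invariant framework and explicit sequence-rewriting plan for Rules~4 and~10 are more detailed than the paper's terse justifications (``none of these vertices are needed'' and ``no gain in capacity possible \ldots\ hence this particular piece is of no value''), but the underlying ideas coincide.
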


\begin{proof}
We show that no modification of a reduction rule removes possible
reconfiguration sequences.  This is trivially true for Rules 3 and 6.

The vertices removed by Rules 1, 2, and 4 play different roles in
converting a reconfiguration sequence for a reduced instance to a
reconfiguration sequence for the original instance.  As there is no
cycle that can be destroyed only by a vertex removed from $O$ by Rule
1, 2, or 4, none of these vertices are needed.  To account for the
required removal (addition) of each such vertex in $S_D$
($T_A$), we remove all $d$ such vertices and decrease $\ell$ by $d$.
We can choose
to leave a $v \in C_M$ in each solution in the sequence (with
no impact on $\ell$) or to remove and then readd $v$ to free up extra capacity,
at a cost of incrementing $\ell$ by two; in the reduced instance we thus remove $v$ and either decrement $k$ or subtract two from $\ell$.  Since this choice can be made for each
of these vertices, ${\cal E}$ in total, we try to solve any of ${\cal
  E} +1$ versions $(G', S', T', k-e, \ell-2({\cal E} - e))$ for $0 \le
e \le {\cal E}$.

For each of Rules 5, 7, and 8, we show that the removed vertex $v$ is in
$C_F$; since the cycles formed by $v$ must be handled by each solution
in the sequence, the instance can be reduced by
removing $v$ and decrementing $k$. For Rule 5, $v \in C_F$ since every
feedback arc set must contain $v$.  For Rules 7 and 8,
$v \in C_F$, since any feedback vertex set not
containing $v$ would have to contain at least $k+1$ vertices, one for
each cycle.

For Rule 9, Bodlaender's Lemma 8 shows that the removed edge has no impact on
feedback vertex sets.

For Rule 10, we first assume that Rule 9 has been exhaustively
applied, and thus each vertex in the border has two edges to $X$.  By
Fact~\ref{fact-pieces} for $\pi$ the set of acyclic graphs,
there cannot be a cycle in $G[O \cup \{v\}]$ for any $v \in S_D \cup
T_A \cup O$, and hence each member of the border is in $C$.  Lemma 9
in Bodlaender's paper shows that there is a minimum size feedback
vertex set containing $v$: even if all the neighbours of $v$ in the
border are included in a feedback vertex set, at least one more vertex
is required to break the cycle formed by $v$ and $X$.
There is no gain in capacity possible by replacing $v$ in the
reconfiguration sequence, and hence this particular piece is of no
value in finding a solution.\qed
\end{proof}

%
%

We first present the key points and lemmas in Bodlaender's
counting argument and then show that, with minor modifications, the
same argument goes through for our modified reduction rules and
altered definition of {\em double edge}.

In Bodlaender's proof, the size of the reduced instance is bounded by
bounding the sizes of $A$ and $B$ (Lemma~\ref{lemma-bod-AB}), bounding
the number of pieces (Lemma~\ref{lemma-bod-piece-count}), and bounding
the size of each piece. Crucial to the proof of
Lemma~\ref{lemma-bod-piece-count} is Lemma~\ref{lemma-bod-not-double},
as the counting  associates each piece with a pair of
vertices in its border that are not connected by a double edge and
then counts the number of pieces associated with each different type of
pair.  We use Lemma~\ref{lemma-bod-9} in the discussion below.

\begin{lemma}\cite{B07}\label{lemma-bod-9}
Suppose $v \in A \cup B$ governs a piece with vertex set $X$.  Suppose
there are at least two edges with one endpoints $v$ and one endpoint
in $X$.  Then there is a minimum size feedback vertex set in $G$ that
contains $v$.
\end{lemma}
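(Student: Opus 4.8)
The plan is a direct exchange argument. Take any minimum feedback vertex set $F$ of $G$; if $v \in F$ we are done, so assume $v \notin F$ and modify $F$ into a feedback vertex set of no larger size that contains $v$. Two structural facts drive the argument. First, $G[X]$ is a tree: since $A$ is a feedback vertex set disjoint from $X$, the graph $G[X]$ is an induced subgraph of the forest $G - A$, and $X$ is connected by definition of a piece. Second, if $v \notin F$ then $F \cap X \neq \emptyset$: the two edges from $v$ to $X$, together with a path between their endpoints inside the tree $G[X]$ (or the two-edge cycle on $\{v,x\}$ when both edges hit the same $x \in X$), form a cycle whose only vertex outside $X$ is $v$, so $F$ must break it using a vertex of $X$.

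The heart of the proof is that governing pushes every other border vertex into $F$. Let $Y$ be the border of $X$. For each $w \in Y \setminus \{v\}$ the hypothesis provides a double edge between $v$ and $w$, that is, two internally disjoint $v$--$w$ connections all of whose internal vertices have degree two in $G$. Choosing $F$ among all minimum feedback vertex sets so that it avoids such degree-two vertices (by the standard exchange that slides a degree-two solution vertex onto a neighbour; if that sliding is ever forced to place $v$ in the set, we are already done), both connections survive in $G - F$. Hence if $w \notin F$ too, $G - F$ would contain a cycle through $v$ and $w$, contradicting that $G - F$ is a forest; therefore $Y \setminus \{v\} \subseteq F$.

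Now put $F' = (F \setminus X) \cup \{v\}$, so $|F'| \le |F|$ by the second fact, and it remains to check that $F'$ is a feedback vertex set. Any cycle $C$ of $G - F'$ avoids $v$ and $F \setminus X$, hence also avoids all of $Y$ (since $Y \setminus \{v\} \subseteq F \setminus X$); but a cycle meeting $X$ must exit $X$ through a border vertex, so such a $C$ would lie entirely in $X$, contradicting that $G[X]$ is a tree. So $C$ avoids $X$ as well, whence $C$ avoids all of $F = (F \setminus X) \cup (F \cap X)$ and is a cycle in the forest $G - F$ --- a contradiction. Therefore $F'$ is a minimum feedback vertex set containing $v$.

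The only delicate point I anticipate is the treatment of the extended notion of double edge: one must be sure a minimum feedback vertex set can be taken to avoid the degree-two internal vertices lying on the $v$--$w$ connections, so that those connections genuinely persist after deleting $F$. This is exactly the routine ``slide a degree-two vertex to a neighbour'' exchange, and once it is in place the remainder is the short exchange above, coinciding with Bodlaender's original proof.
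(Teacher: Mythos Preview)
The paper does not prove this lemma; it is quoted from Bodlaender~\cite{B07} (it is Lemma~9 there) and used as a black box in the discussion of Rule~10 and in the proof of Lemma~\ref{lemma-fvs-bod}. So there is no proof in the present paper to compare against.

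Your reconstruction is the standard exchange argument and is correct. One remark: in Bodlaender's original setting, Rule~4 has already eliminated \emph{all} degree-two vertices, so every double edge is a genuine pair of parallel edges; your step showing $Y\setminus\{v\}\subseteq F$ is then a one-liner (the two-cycle on $\{v,w\}$ forces $w\in F$ once $v\notin F$), and the degree-two sliding manoeuvre is unnecessary. The extra care you take with the extended notion of double edge is relevant only for the present paper's modified reduction, not for the cited lemma itself---but it does no harm, and indeed the paper later relies on exactly this kind of observation when arguing that the modifications preserve Bodlaender's counting lemmas.
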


\begin{lemma}\cite{B07}\label{lemma-bod-AB}
In a reduced instance, there are at most $2k$ vertices in $A$ and at
most $2k^2$ vertices in $B$.
\end{lemma}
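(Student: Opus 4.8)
The plan is to follow Bodlaender's original counting argument, checking that the two modifications made in our setting --- using $A = C \cup S_D \cup T_A$ in place of an arbitrary 2-approximate feedback vertex set, and broadening the notion of \emph{double edge} --- leave that argument intact.

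First I would bound $|A|$. At the start of every initialization phase we have $A = C \cup S_D \cup T_A = S \cup T$, where $S = C \cup S_D$ and $T = C \cup T_A$ are feedback vertex sets of the current graph of size at most the current $k$. Since a union of feedback vertex sets is a feedback vertex set and $T \setminus S = T_A$, we get $|A| = |S| + |T_A| \le k + (k - |C|) \le 2k$. I would then argue this invariant persists: within a single phase the reduction rules only ever delete vertices from $A$ (Rules~1 and~2 may delete vertices of $C \cup S_D \cup T_A$, while Rule~4 deletes only from $O$ and Rules~3, 6, 9 do not change the vertex set) and never increase $k$; and whenever a rule decreases $k$ (Rules~5, 7, 8, 10), the deleted vertex lies in $C_F$ by the argument in the proof of Lemma~\ref{lemma-fvs-reconf}, so $S$ and $T$ remain feedback vertex sets of size at most the new $k$ and the restarted phase re-establishes $|A| = |S \cup T| \le 2k$. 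Hence $|A| \le 2k$ in any reduced instance.

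Next I would bound $|B|$. Every vertex of $B$ is, by definition, joined by a double edge (in the extended sense) to at least one vertex of $A$. In a reduced instance Rule~8 does not apply, so no vertex --- in particular no vertex of $A$ --- has $k+1$ or more neighbours joined to it by a double edge; thus each vertex of $A$ is responsible for at most $k$ vertices of $B$. Summing over $A$ gives $|B| \le k \cdot |A| \le 2k^2$.

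The step needing the most care, and hence the crux, is confirming that the \emph{altered} definition of double edge is precisely the one used both in the definition of $B$ and in the hypothesis of Rule~8, so that the inequality $|B| \le k\,|A|$ is applied consistently; and that the $k$ in the final bound is the (possibly reduced) parameter of the reduced instance itself, so that the invariants above apply at the exact moment no rule fires. Everything else is bookkeeping: the count is insensitive to which particular size-$\le 2k$ feedback vertex set serves as $A$, so Bodlaender's derivation transfers once $|A| \le 2k$ is in hand.
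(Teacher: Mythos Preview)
Your proposal is correct, but note that the paper does not actually prove this lemma: it is stated with a citation to Bodlaender~\cite{B07} and left unproved, since it is a statement about Bodlaender's original reduction. The paper's only justification that the bounds survive the modifications appears later, in the proof of Lemma~\ref{lemma-fvs-bod}, and is a two-line remark: $A$ is initialized to $C \cup S_D \cup T_A$, hence $|A| \le 2k$, and then the bound on $B$ ``follows from Rule~8.''

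Your argument follows exactly this route but supplies the details the paper omits --- in particular, the invariant-maintenance argument across phases and the observation that vertices removed by Rules~5, 7, 8 lie in $C_F$ so that the restarted phase still has $|S|,|T|\le k$. This is strictly more than the paper offers; the approaches are the same, you have simply filled in the bookkeeping that the paper delegates to the citation.
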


\begin{lemma}\cite{B07}\label{lemma-bod-not-double}
Suppose none of the Rules 1--10 can be applied to $G$.  Suppose $Y
\subseteq V$ is the border of a piece in $G$.  Then there are two
disjoint vertices $v,w \in Y$ such that $\{v,w\}$ is not a double
edge.
\end{lemma}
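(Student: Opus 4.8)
The plan is to verify that the statement above --- which is Bodlaender's --- survives our modifications to the reduction rules and to the notion of double edge, by reading it as essentially the contrapositive of our replacement for Rule~10. Write $X$ for the vertex set of the piece whose border is $Y$, and suppose that none of Rules~1--10 is applicable to $G$. Toward a contradiction, assume that every pair of distinct vertices of $Y$ is joined by a double edge. Our replacement for Rule~10 asserts precisely that, under this condition on the border of the piece with vertex set $X$, the set $X$ may be removed; hence that rule is applicable to $G$, contradicting the hypothesis. Therefore $Y$ contains two distinct vertices joined by no double edge.

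Two small points round this out. First, when $|Y| \le 1$ the ``every pair'' condition holds vacuously, so the same reduction would still fire; thus the argument also records that in a fully reduced instance every piece has $|Y| \ge 2$. Second, for continuity with Bodlaender's original formulation I would note how the contradiction arises there: if every $v \in Y$ is joined by a double edge to every other vertex of $Y$, then every such $v$ \emph{governs} the piece, and since a border vertex has at least one neighbour in $X$, either $v$ has two or more neighbours in $X$ --- whereupon the original Rule~10 is applicable, its soundness being the content of Lemma~\ref{lemma-bod-9} --- or $v$ has exactly one neighbour $w$ in $X$, whereupon Rule~9 is applicable and removes the edge $\{v,w\}$; in either case some rule applies.

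The step I expect to demand the most care --- the ``minor modification'' promised in the surrounding discussion --- is checking that the equivalence above is not disturbed by the extended notion of double edge, which now counts, besides two parallel edges, an edge together with a path of degree-two internal vertices, or two such paths. It is not disturbed: since $A$ is initialized to $C \cup S_D \cup T_A$ and the reduction rules only ever delete vertices, every piece consists solely of vertices of $O$, and each of these is forced to have degree at least three by exhaustive application of Rules~1, 2, and~4, exactly as in Bodlaender's argument. Consequently the piece and its border are structurally as before; the only change at the border is that ``double edge'' now denotes the extended notion, and this is exactly the notion used in the applicability test of the replaced Rule~10, so the lemma remains its contrapositive. \qed
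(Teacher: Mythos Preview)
Your proposal is correct and matches the paper's own treatment. The paper does not give a standalone proof of this lemma---it is simply cited from Bodlaender---but in the proof of Lemma~\ref{lemma-fvs-bod} the paper argues exactly as you do: the replacement Rule~10 is formulated so that the lemma becomes its contrapositive, whence the lemma holds immediately in the modified setting; the paper also recalls Bodlaender's original argument via governing vertices and Rules~9 and~10, just as you do in your second paragraph.
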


\begin{lemma}\cite{B07}\label{lemma-bod-piece-count}
Suppose we have a reduced instance.  There are at most $8k^3 + 9k^2 + k$ pieces.
\end{lemma}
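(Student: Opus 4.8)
The plan is to recover Bodlaender's piece-count bound~\cite{B07} in our setting, the substantive task being to verify that the italicized modifications of Rules~4 and~10 and the extended notion of \emph{double edge} leave his counting argument intact. I would follow Bodlaender's charging scheme: to each piece $P$ with border $Y$, assign by Lemma~\ref{lemma-bod-not-double} an unordered pair $\{v,w\}$ of distinct vertices of $Y$ that is \emph{not} a double edge (ties broken arbitrarily), and then bound the number of pieces assigned to pairs of each of the three types --- both endpoints in $A$, one in $A$ and one in $B$, both in $B$ --- and add up. For the types involving $B$ I would use the defining property of $B$: each $w\in B$ has a double edge to some $a(w)\in A$, and since $\{v,w\}$ is not a double edge, $v\neq a(w)$; this lets one re-charge a piece whose assigned pair meets $B$ to a pair of distinct vertices of $A$, the number of candidate $B$-vertices attached to a given vertex of $A$ being at most $k$ by Rule~8. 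Lemma~\ref{lemma-bod-9}, together with Rule~9, disposes of the pieces governed by some border vertex, which need not be counted separately.

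The quantitative heart is the bound on the number of pieces charged to a fixed pair $\{v,w\}$ with $v\in A$. Each such piece is connected with $v$ and $w$ in its border, so it contains an internal $v$--$w$ path; distinct pieces are disjoint, so these paths are internally vertex-disjoint. If there were $k+2$ of them we would obtain $k+2$ vertex-disjoint $v$--$w$ paths with no double edge between $v\in A$ and $w$, contradicting that the instance is reduced with respect to Rule~6; hence at most $k+1$ pieces are charged to each such pair. Feeding in $|A|\le 2k$ and $|B|\le 2k^2$ from Lemma~\ref{lemma-bod-AB}, together with the re-charging above, and carrying out Bodlaender's case bookkeeping then yields the stated bound $8k^3+9k^2+k$; I would not reproduce this arithmetic in detail.

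The main obstacle --- the only place the proof really departs from~\cite{B07} --- will be justifying that the modifications are harmless. First, Lemmas~\ref{lemma-bod-AB} and~\ref{lemma-bod-not-double} must be re-derived under our conventions: the extended definition of double edge (two edges, an edge together with a degree-two path, or two degree-two paths) is exactly the notion invoked by Rules~6,~8 and~9, so the vertex-disjoint-path count of Rule~6 and the double-edge-neighbour count of Rule~8 are unaffected, and the modified Rule~10, which removes a piece's vertex set $X$ whenever the border of the piece is pairwise joined by double edges, removes exactly those pieces that contribute nothing by the argument in Lemma~\ref{lemma-fvs-reconf}; hence a reduced instance contains no such piece and Lemma~\ref{lemma-bod-not-double} still holds. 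Second, although Rule~4 now contracts a degree-two vertex only when it lies in $O$, every vertex inside a piece lies in $O$: $A$ is initialized to $C\cup S_D\cup T_A$ and no rule demotes a vertex out of $A$ without deleting it from $G$, so a piece, being a component of $G[V\setminus(A\cup B)]$, avoids $C\cup S_D\cup T_A$ entirely; thus degree-two vertices of pieces are still contracted by Rule~4, pieces keep the structure Bodlaender's counting relies on, and his argument goes through verbatim.
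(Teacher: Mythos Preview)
The paper does not prove this lemma at all: it is quoted verbatim from Bodlaender~\cite{B07} with a citation and no argument, as one of several black-box lemmas (Lemmas~\ref{lemma-bod-9}--\ref{lemma-bod-piece-count}) imported from that paper. Your sketch of Bodlaender's charging scheme (assign to each piece a non-double-edge pair in its border via Lemma~\ref{lemma-bod-not-double}, bound the multiplicity of each pair by $k+1$ through Rule~6, and sum over the three pair-types using $|A|\le 2k$, $|B|\le 2k^2$) is a faithful reconstruction of his proof, so as a proof of the cited statement it is correct.

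The second half of your proposal, where you argue that the modified Rule~10 still guarantees Lemma~\ref{lemma-bod-not-double} and that piece vertices all lie in $O$ so Rule~4 still contracts their degree-two vertices, is not part of any proof of this lemma in the paper: that material appears instead in the proof of Lemma~\ref{lemma-fvs-bod}, where the paper explicitly reduces the validity of Lemma~\ref{lemma-bod-piece-count} under the modified rules to checking that Lemma~\ref{lemma-bod-not-double} survives, and handles the residual degree-two vertices by the same observation you make. So your content matches the paper's, but you have folded the next lemma's argument into this one.
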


\begin{lemma}\label{lemma-fvs-bod}
Each reduced instance has $O(k^3)$ vertices and $O(k^3)$ edges, and can
be obtained in polynomial time.
\end{lemma}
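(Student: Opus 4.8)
The plan is to apply the modified reduction rules exhaustively and then recover Bodlaender's $O(k^3)$ bound almost verbatim, so that the only real work is confirming that none of our modifications damages the counting. First I would isolate the structural invariant that makes this possible: because the initialization sets $A = C \cup S_D \cup T_A$ and the rules only ever \emph{delete} vertices of $C \cup S_D \cup T_A$ from $G$ (and thereby from $A$), we always have $A \supseteq C \cup S_D \cup T_A$, so every piece --- a connected component of $G[V \setminus (A \cup B)]$ --- is contained entirely in $O$. This is the key point: inside a piece the modified Rule~4 (which contracts a degree-two vertex only when it lies in $O$) behaves exactly like Bodlaender's original Rule~4; the modified Rule~10 removes at least as much as the original; and the $C/S_D/T_A/O$ roles, together with the bookkeeping of $k$, $\ell$, and ${\cal E}$, alter only parameters, never the graph. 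One should also record that $A$ stays a feedback vertex set of the (multi)graph throughout: every double edge introduced by Rule~4 arises from contracting a triangle $v,w,x$ with $v\in O$, which $S$ and $T$ already had to hit, and every double edge introduced by Rule~6 is incident to a vertex of $A$.

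The second step is to check that Bodlaender's counting lemmas survive the enlarged definition of \emph{double edge}. Here I would verify that every place in \cite{B07} where the proof invokes ``a double edge between $v$ and $w$'' uses only that there are two internally vertex-disjoint $v$--$w$ connections --- a property the extended double edges (one edge plus one degree-two path, or two degree-two paths) still certify --- and that Rules~6, 8, 9, 10 and the definition of $B$ use the notion consistently. Granting this, Lemmas~\ref{lemma-bod-AB}, \ref{lemma-bod-not-double}, and \ref{lemma-bod-piece-count} hold as stated: $|A| \le |S| + |T| \le 2k$ at initialization and $A$ only shrinks, $|B| \le 2k^2$, and there are at most $8k^3 + 9k^2 + k$ pieces; moreover Bodlaender's argument bounding the total contribution of all pieces to $O(k^3)$ vertices and $O(k^3)$ edges --- which builds on Lemma~\ref{lemma-bod-piece-count}, a bound on the size of each reduced piece obtained from the exhaustive application of Rules~1, 2, 4, 9, 10, and Lemma~\ref{lemma-bod-not-double} --- applies unchanged, since it concerns only piece interiors (which lie in $O$) and piece borders.

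Combining the bounds then gives $|V(G')| \le |A| + |B| + O(k^3) = O(k^3)$. For the edges, $G - A$ is a forest since $A$ is a feedback vertex set, so it carries $O(k^3)$ edges; the edges inside $A$ number $O(k^2)$, those between $A$ and $B$ at most $2|A|\,|B| = O(k^3)$, and each of the $O(k^3)$ pieces contributes only $O(1)$ further incident edges (to its border), so with Rule~3 capping multiplicities we get $|E(G')| = O(k^3)$. For the polynomial running time, each individual rule is applied in polynomial time --- Rules~6 and~7 by a bounded number of unit-capacity max-flow computations to detect $k+2$ vertex-disjoint $v$--$w$ paths or $k+1$ cycles meeting only at $v$, as in \cite{B07} --- and every application of Rule~5, 7, or~8 decreases $k$, so there are at most $k$ restarts; within a phase every rule removes a vertex, removes an edge, or turns a single pair into a double edge (and a pair is made double at most once), so only polynomially many applications occur, and $G'$ is produced in polynomial time.

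The step I expect to be the main obstacle is the second one: making sure Bodlaender's three counting lemmas and his per-piece size bound survive \emph{both} the restriction of Rules~4 and~10 to $O$ \emph{and} the widened definition of double edge. The ``every piece lies in $O$'' observation is what keeps this from being genuinely delicate, but one still has to confirm that no argument in \cite{B07} secretly relied on contracting a degree-two vertex outside $O$, or on a double edge in the narrow two-parallel-edges sense, in a way the modified rules no longer support --- in particular that Lemma~\ref{lemma-bod-not-double} still holds once modified Rule~10 (rather than the original) has been applied exhaustively.
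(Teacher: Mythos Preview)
Your proposal is correct and follows essentially the same route as the paper's own proof: both arguments reduce to showing that Bodlaender's Lemmas~\ref{lemma-bod-AB}, \ref{lemma-bod-not-double}, and \ref{lemma-bod-piece-count} survive the modified rules and the broadened notion of double edge, with the only extra slack being the at most $2k$ degree-two vertices in $A$ that Rule~4 no longer contracts. Your ``every piece lies in $O$'' invariant is a clean way to package what the paper states more implicitly (``Rule~4 removes all degree-two vertices in $O$, and hence the number of extra vertices is at most $2k$''), and your explicit edge count and running-time bookkeeping simply fill in details the paper leaves to Bodlaender; the one phrase to tighten is ``modified Rule~10 removes at least as much as the original'' --- it removes \emph{different} things (the piece $X$ rather than a governing border vertex), but you already note the real point, namely that it still forces Lemma~\ref{lemma-bod-not-double}.
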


\begin{proof}

  Our modifications to Rules 1--3 and 5--9 do not have an impact on
  the size of the kernel.  Although our Rule 4 preserves some vertices
  in $A$ of degree two, due to the initialization of $A$ to be $C \cup
  S_D \cup T_A$, and hence of size at most $2k$, the bound on $B$ and
  hence Lemma~\ref{lemma-bod-AB} follows from Rule 8.  In essence, our
  extended definition of double edges handles the degree-two vertices
  that in Bodlaender's constructions would have been replaced by an
  edge.

  To claim the result of Lemma~\ref{lemma-bod-piece-count}, it
  suffices to show that Lemma~\ref{lemma-bod-not-double} holds for our
  modified rules.  Bodlaender shows that if there is a piece such that
  each pair of vertices in the border set is connected by a double
  edge, Rule 10 along with Rule 9 can be applied repeatedly to remove
  vertices from the border of the piece and thereafter Rules 2 and 1
  to remove the piece entirely.

  To justify Rule 10, Bodlaender shows in Lemma~\ref{lemma-bod-9} that
  if $v \in A \cup B$ governs a piece with vertex set $X$ and there
  are at least two edges between $v$ and $X$, then there is a minimum
  size feedback vertex set in $G$ that contains $v$.  For our
  purposes, however, since there may be non-minimum size feedback
  vertex sets used in the reconfiguration sequence, we wish to retain
  $v$ rather than removing it.
  Our modification to Rule 10 allows us to retain $v$, handling all
  the removals from the piece without changing the border, and thus
  establishing Lemma~\ref{lemma-bod-not-double}, as needed to prove
  Lemma~\ref{lemma-bod-piece-count}.

  In counting the sizes of pieces, our modifications result in extra
  degree-two vertices.  Rule 4 removes all degree-two vertices in $O$, and
  hence the number of extra vertices is at most $2k$, having no effect
  on the asymptotic count.  \qed
\end{proof}

\begin{theorem}\label{theorem-fpt-fvs}
  \textsc{Feedback Vertex Set Reconfiguration} and the search variant parameterized by $k$
are in {\em FPT}.
\end{theorem}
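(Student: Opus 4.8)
The plan is to regard Lemmas~\ref{lemma-fvs-reconf} and~\ref{lemma-fvs-bod} as together supplying a polynomial reconfiguration kernel for \textsc{Feedback Vertex Set Reconfiguration}, and then to decide each kernelized instance by brute force over its bounded reconfiguration graph. First I would observe that ${\cal E} \le |C| \le |S| \le k$, so by Lemma~\ref{lemma-fvs-reconf} the instance $(G,S,T,k,\ell)$ is equivalent to the disjunction of at most $k+1$ reduced instances $(G',S',T',k-e,\ell-2({\cal E}-e))$, $0 \le e \le {\cal E}$; by Lemma~\ref{lemma-fvs-bod} each $G'$ has $O(k^3)$ vertices and edges and is computable in polynomial time. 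After discarding any reduced instance whose budget $\ell - 2({\cal E}-e)$ is negative, it remains to decide a single reduced instance within an FPT time bound.

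Take $\pi$ to be the class of forests. Every node of $R^{\pi}_{\textsc{del}}(G',k-e)$ is a subset of $V(G')$ of size at most $k$; since $|V(G')| = O(k^3)$, the number of nodes is $\sum_{i \le k}\binom{|V(G')|}{i} = 2^{O(k\log k)}$. I would construct this graph explicitly---enumerate all such subsets, keep those whose complement in $G'$ induces a forest (testable in time polynomial in $|V(G')|$), and, by Fact~\ref{fact-degree-bound}, join two retained subsets whenever they differ in exactly one vertex---and run breadth-first search from $S'$. If $T'$ is reached at distance $d$, the reduced instance is a yes-instance precisely when $d \le \ell - 2({\cal E}-e)$; although $\ell$ need not be bounded by any function of $k$, this comparison is sound, since a reachable $T'$ lies at distance at most the (bounded) number of nodes. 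Summing over the at most $k+1$ reduced instances and the polynomial-time kernelization, the total running time is $\Oh^*(2^{O(k\log k)})$, establishing that \textsc{Feedback Vertex Set Reconfiguration} parameterized by $k$ is in FPT.

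For the search variant, the same breadth-first search also returns a concrete shortest path in $R^{\pi}_{\textsc{del}}(G',k-e)$, that is, a reconfiguration sequence for the winning reduced instance using only subsets of $V(G')$. Reversing the correspondence established in the proof of Lemma~\ref{lemma-fvs-reconf}, I would lift this sequence to one for $(G,S,T,k,\ell)$: include in every solution of the sequence the $e$ vertices of $C$ that were charged against the capacity, then prepend the $O(n)$ steps that, starting from $S$, delete the vertices removed by Rules~1, 2, and~4 together with the ${\cal E}-e$ remaining ``remove-and-readd'' vertices of $C$, and symmetrically append the steps that extend the final solution to $T$. This lift adds only $O(n)$ steps and is computable in polynomial time, so the search variant is in FPT as well.

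I expect no essentially new obstacle here: the genuinely delicate work---designing reduction rules that never delete a solution that some reconfiguration sequence needs (recalling that such a sequence may use non-minimal feedback vertex sets), while still inheriting Bodlaender's cubic size bound---is already carried out in Lemmas~\ref{lemma-fvs-reconf} and~\ref{lemma-fvs-bod}. What remains is routine: bounding the number of reduced instances by $k+1$, noting that a reconfiguration graph on $f(k)$ nodes can be built and searched within an FPT budget, and verifying that the lift of a reduced reconfiguration sequence to the original instance is faithful and of polynomial length.
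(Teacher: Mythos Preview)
Your proposal is correct and follows essentially the same approach as the paper: bound the number of reduced instances by ${\cal E}+1 \le |C|+1 \le k+1$, invoke Lemmas~\ref{lemma-fvs-reconf} and~\ref{lemma-fvs-bod} to obtain a reconfiguration kernel, and for the search variant build the reconfiguration graph of the winning reduced instance, extract a path by BFS, and lift it back to $(G,S,T,k,\ell)$. The paper is terser for the decision part (it simply cites the definition of reconfiguration kernel rather than spelling out the $2^{O(k\log k)}$ brute force you describe), and for the lift it is slightly more careful: it partitions only the vertices of $C$ removed by Rules~1 and~2 into a set $K$ of size $e$ (added to every solution of $\sigma$) and a set $M$ of size ${\cal E}-e$ (deleted before $\sigma'$ and re-added after), and then checks that the total length is $({\cal E}-e)+(\ell-2({\cal E}-e))+({\cal E}-e)\le\ell$. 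Your lift mentions Rule~4, but Rule~4 removes only vertices of $O$, which are in neither $S$ nor $T$; and you should verify the length bound rather than just saying the lift adds $O(n)$ steps. These are minor polish points; the argument is the same.
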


\begin{proof}
  Since the number of reduced instances
  is ${\cal E} + 1 \le |C| + 1 \le k + 1$,
  as a consequence of Lemmas~\ref{lemma-fvs-reconf} and
  \ref{lemma-fvs-bod}, we have a reconfiguration kernel, proving the
  first result.

  For the search version, we observe that we can generate the
  reconfiguration graph of the reduced yes-instance and use it to
  extract a reconfiguration sequence.
  We demonstrate that we can form a reconfiguration sequence for
  $(G,S,T,k,\ell)$ from the reconfiguration sequence $\sigma$ for the
  reduced yes-instance $(G',S',t',k-e,\ell-2({\cal E}-e))$.  We choose an
  arbitrary partition of the vertices removed from $G$ by Rules 1 and
  2 into two sets, $K$ (the ones to keep) of size $e$ and $M$ (the
  ones to modify) of size ${\cal E}-e$.  We can modify $\sigma$ into a
  sequence $\sigma'$ in which all vertices in $K$ are added to each
  set; clearly no set will have size greater than $k$.  Our
  reconfiguration sequence then consists of ${\cal E}-e$ steps each
  deleting an element of $M$, the sequence $\sigma'$, and ${\cal E}-e$ steps each
  adding an element of $M$, for a length of at most $({\cal E} -e) +
  (\ell -({\cal E}-e)) + ({\cal E} - e) \le \ell$, as needed. \qed
\end{proof}

\section{Hardness Results}\label{sec-relate}

The reductions presented in this section make use of the forbidden set
characterization of heredity properties.  A {\em $\pi$-critical graph}
$H$ is a (minimal) graph in the forbidden set $\Fh_\pi$ that has at
least two vertices; we use the fact that $H \notin \pi$, but the
deletion of any vertex from $H$ results in a graph in $\pi$.  For
convenience, we will refer to two of the vertices in a $\pi$-critical
graph as {\em terminals} and the rest as {\em internal vertices}. We
construct graphs from multiple copies of $H$.  For a positive integer
$c$, we let $H_c^*$ be the (``star'') graph obtained from each of $c$ copies $H_i$ of $H$
by identifying an arbitrary terminal $v_i$, $1 \le i \le c$, from each
$H_i$;
in
$H_c^*$
vertices $v_1$ through $v_c$ are replaced with a vertex $w$, the {\em
  gluing vertex of $v_1$ to $v_c$}, to form a graph with vertex set
$\cup_{1 \le i \le c} (V(H_i) \setminus \{v_i\}) \cup \{w\}$ and edge
set $\cup_{1 \le i \le c}\{\{u,v\} \in E(H_i) \mid v_i \notin
\{u,v\}\} \cup \cup_{1 \le i \le c}\{\{u,w\} \mid \{u,v_i\} \in E(H_i)\}$.
A terminal is {\em non-identified} if it is not used in forming a
gluing vertex.

In Figure~\ref{fig-star}, $H$ is a $K_3$ with terminals marked black
and gray; $H_4^*$ is formed by identifying all the gray terminals to
form $w$.

\begin{figure}
\begin{centering}
\centerline{\includegraphics[scale=0.45]{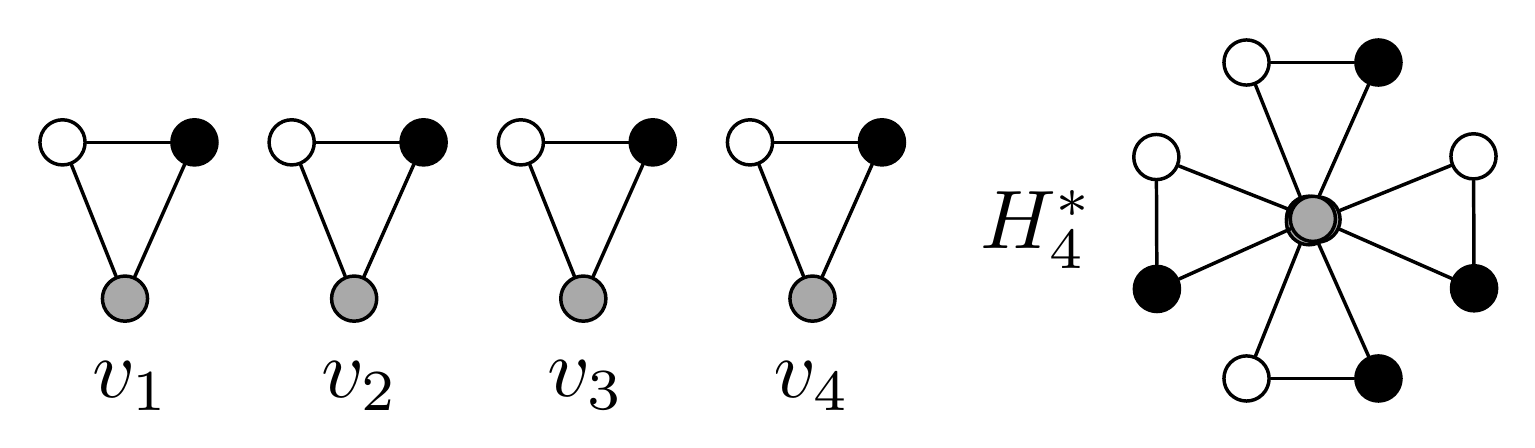}}
\end{centering}
\caption{An example $H_c^*$}
\label{fig-star}
\end{figure}

\begin{theorem}\label{theorem-reconf-minmax}
Let $\pi$ be any hereditary property satisfying the following:
\begin{itemize}
\item
For any two graphs $G_1$ and $G_2$ in $\pi$, the graph obtained by their disjoint union is in $\pi$.
\item
There exists an $H \in \Fh_\pi$ such that if $H_c^*$ is the graph obtained
from identifying a terminal from each of $c$ copies of $H$, then the graph
$R= H_c^*[V(H_c^*) \setminus \{ u_1, u_2, \ldots u_c \}]$ is in $\pi$, where $u_1, u_2, \ldots u_c$
are the non-identified terminals in the $c$ copies of $H$.
\end{itemize}
Then each of the following is at least as hard as \textsc{$\pi$-subset$(G, k)$}:
\begin{enumerate}
\item
\textsc{$\pi$-del-reconf$(G, S, T, k, \ell)$} parameterized by $\ell$, and
\item
\textsc{$\pi$-sub-reconf$(G, S, T, k, \ell)$} parameterized by $k + \ell$.
\end{enumerate}
\end{theorem}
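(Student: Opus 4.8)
The plan is to give an FPT reduction from \textsc{$\pi$-subset$(G,k)$} to \textsc{$\pi$-sub-reconf}; the statement for \textsc{$\pi$-del-reconf} parameterized by $\ell$ then follows by replacing every solution by its complement, which, via the correspondence between $R^{\pi}_{\textsc{sub}}$ and $R^{\pi}_{\textsc{del}}$ noted in Section~\ref{sec-prelims}, preserves yes/no status and path length and changes only the $k$-coordinate of the parameter. Fix the graph $H$ from the hypothesis and call its two distinguished vertices the terminals. Given $(G,k)$, build $G' = G \sqcup \Gamma$ (disjoint union, no edges between the parts), where $\Gamma$ is a copy of $K_{k,k}$ ``blown up by $H$'': it has $2k$ \emph{ports} $w_1,\dots,w_k,u_1,\dots,u_k$ and, for all $1 \le i,j \le k$, a private copy $H_{ij}$ of $H$ whose terminals are identified with $w_i$ and $u_j$ (all internal vertices of all the $H_{ij}$ are fresh). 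Writing $I$ for the set of all these internal vertices, set $W = \{w_1,\dots,w_k\} \cup I$, $U = \{u_1,\dots,u_k\} \cup I$ (so $|W| = |U| =: M$), $S = W$, $T = U$, $k' = M$, and $\ell = 4k$.

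Because $G'$ is a disjoint union and $\pi$ is hereditary and closed under disjoint union, a set is a node of $R^{\pi}_{\textsc{sub}}(G',k')$ exactly when it has the form $A \cup B$ with $G[A] \in \pi$, $\Gamma[B] \in \pi$, and $|A| + |B| \ge M$. First I would check that $S$ and $T$ are nodes: $\Gamma[W]$ decomposes, over the index $i$, as the disjoint union of the graphs obtained by gluing the $k$ copies $H_{i1},\dots,H_{ik}$ at the common terminal $w_i$ after deleting each $u_j$, each of which is a copy of the graph $R$ of the hypothesis and hence lies in $\pi$; so $\Gamma[W] \in \pi$ by closure under disjoint union, and symmetrically $\Gamma[U] \in \pi$. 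For completeness, given an induced-$\pi$ set $V^{\star} \subseteq V(G)$ with $|V^{\star}| = k$, reconfigure $S \to T$ by adding the vertices of $V^{\star}$ one at a time, then deleting $w_1,\dots,w_k$ one at a time (each intermediate $\Gamma$-part again decomposes into copies and induced subgraphs of $R$, hence is in $\pi$), then adding $u_1,\dots,u_k$ one at a time, then deleting $V^{\star}$; every intermediate solution is a disjoint union of an induced-$\pi$ subgraph of $G$ (a subset of $V^{\star}$) and one of $\Gamma$, and has size at least $M = k'$ because $V^{\star}$ supplies a cushion of exactly $k$ just as the $\Gamma$-part bottoms out at $|I| = M - k$. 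The sequence has length $4k = \ell$.

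For soundness, suppose a path $A_0 \cup B_0,\dots,A_m \cup B_m$ from $S$ to $T$ exists in $R^{\pi}_{\textsc{sub}}(G',k')$, with $A_t \subseteq V(G)$ and $B_t \subseteq V(\Gamma)$. Since $B_0 = W$ contains no port among the $u$'s while $B_m = U$ does, let $t$ be the first step adding some $u_j$; then $B_{t-1}$ contains no $u$-port, so $B_{t-1} \subseteq W$, and $\Gamma[B_{t-1} \cup \{u_j\}] \in \pi$. For each $i$, the copy $H_{ij}$ is forbidden, so $B_{t-1} \cup \{u_j\}$ must omit a vertex of $V(H_{ij}) \setminus \{u_j\} = \{w_i\} \cup \mathrm{internal}(H_{ij})$; these $k$ sets are pairwise disjoint subsets of $W$, so $B_{t-1}$ omits at least $k$ vertices of $W$ and hence $|B_{t-1}| \le M - k$. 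Consequently $|A_{t-1}| \ge k' - |B_{t-1}| \ge k$ and $G[A_{t-1}] \in \pi$, so $G$ has an induced-$\pi$ subgraph of size at least $k$. Since $k' + \ell = M + 4k$ is bounded by a function of $k$ (with $H$ fixed), this is an FPT reduction; for \textsc{$\pi$-del-reconf} the relevant parameter $\ell = 4k$ is likewise a function of $k$.

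I expect the main obstacle to be designing $\Gamma$ so that one reconfiguration step is \emph{forced} to open a deficit of $k$ in the gadget part: a single gluing vertex, i.e. using $H_k^{*}$ itself, only forces a deficit of one, because adding one port $u_j$ activates only one copy of $H$ — the $K_{k,k}$-style cross-wiring is precisely what makes a single addition simultaneously activate $k$ pairwise-disjoint copies. The second delicate point is verifying that loading all internal vertices into $W$ and $U$ keeps these sets in $\pi$ (this is exactly where the hypothesis on $R$ enters, and it is what makes the argument work uniformly regardless of whether the two terminals of $H$ are adjacent) and that $|W| = |U|$, so that no size-equalizing padding — which would shrink the forced deficit and break soundness — is ever needed.
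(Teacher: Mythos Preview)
Your proposal is correct and follows essentially the same approach as the paper: both build the identical gadget---$k^2$ copies of $H$ arranged in a $K_{k,k}$ pattern with one side of ports $\{w_i\}$ and the other side $\{u_j\}$---and use it to force a size-$k$ deficit that must be compensated by an induced-$\pi$ subset of $G$. The only cosmetic difference is that the paper reduces to \textsc{$\pi$-del-reconf} first (setting $S=V(G)\cup\{a_i\}$, $T=V(G)\cup\{b_j\}$, capacity $|V(G)|+k$) and then complements, while you reduce to \textsc{$\pi$-sub-reconf} first (with $S$ and $T$ the complements of those sets inside $G'$) and complement in the other direction; under the $R^{\pi}_{\textsc{sub}}\leftrightarrow R^{\pi}_{\textsc{del}}$ correspondence these are literally the same instances. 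Your soundness argument---fixing the first step that adds some $u_j$ and observing that the $k$ copies $H_{1j},\dots,H_{kj}$ are pairwise vertex-disjoint away from $u_j$---is in fact a bit more careful than the paper's, which asserts ``must add all of $B$'' where strictly one could add internal vertices instead; your phrasing handles both possibilities uniformly and reaches the same $k$-deficit conclusion.
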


\begin{proof}
  Given an instance of \textsc{$\pi$-subset$(G, k)$} and a $\pi$-critical
  graph $H$ satisfying the hypothesis of the lemma, we form an
  instance of \textsc{$\pi$-del-reconf$(G', S, T, |V(G)| + k, 4k)$}, with
  $G'$, $S$, and $T$ defined below.  The graph $G'$ is the disjoint
  union of $G$ and a graph $W$ formed from $k^2$ copies of $H$, where
  $H_{i,j}$ has terminals $\ell_{i,j}$ and $r_{i,j}$.  We let $a_i$, $1 \le i \le
  k$, be the gluing vertex of $\ell_{i, 1}$ through
  $\ell_{i, k}$, and let $b_j$, $1 \le j \le k$, be the gluing
  vertex of $r_{1, j}$ through $r_{k, j}$, so that there is a
  copy of $H$ joining each  $a_i$ and $b_j$. An example $W$ is shown
  in Figure~\ref{fig-thm17}, where copies of $H$ are shown
  schematically as gray ovals.  We let $A = \{a_i
  \mid 1 \le i \le k\}$, $B = \{b_j \mid 1 \le j \le k \}$, $S = V(G)
  \cup A$, and $T = V(G) \cup B$.  Clearly $|V(G')| = |V(G)| + 2k +
  k^2 (|V(H)| -2)$ and $|S|=|T|= |V(G)| + k$. Moreover, each of $V(G')
  \setminus S$ and $V(G') \setminus T$ induce a graph in $\pi$, as
  each consists of $k$ disjoint copies of $H_k^*$ with one of the
  terminals removed from each $H$ in $H_k^*$.

\begin{figure}
\begin{centering}
\centerline{\includegraphics[scale=0.35]{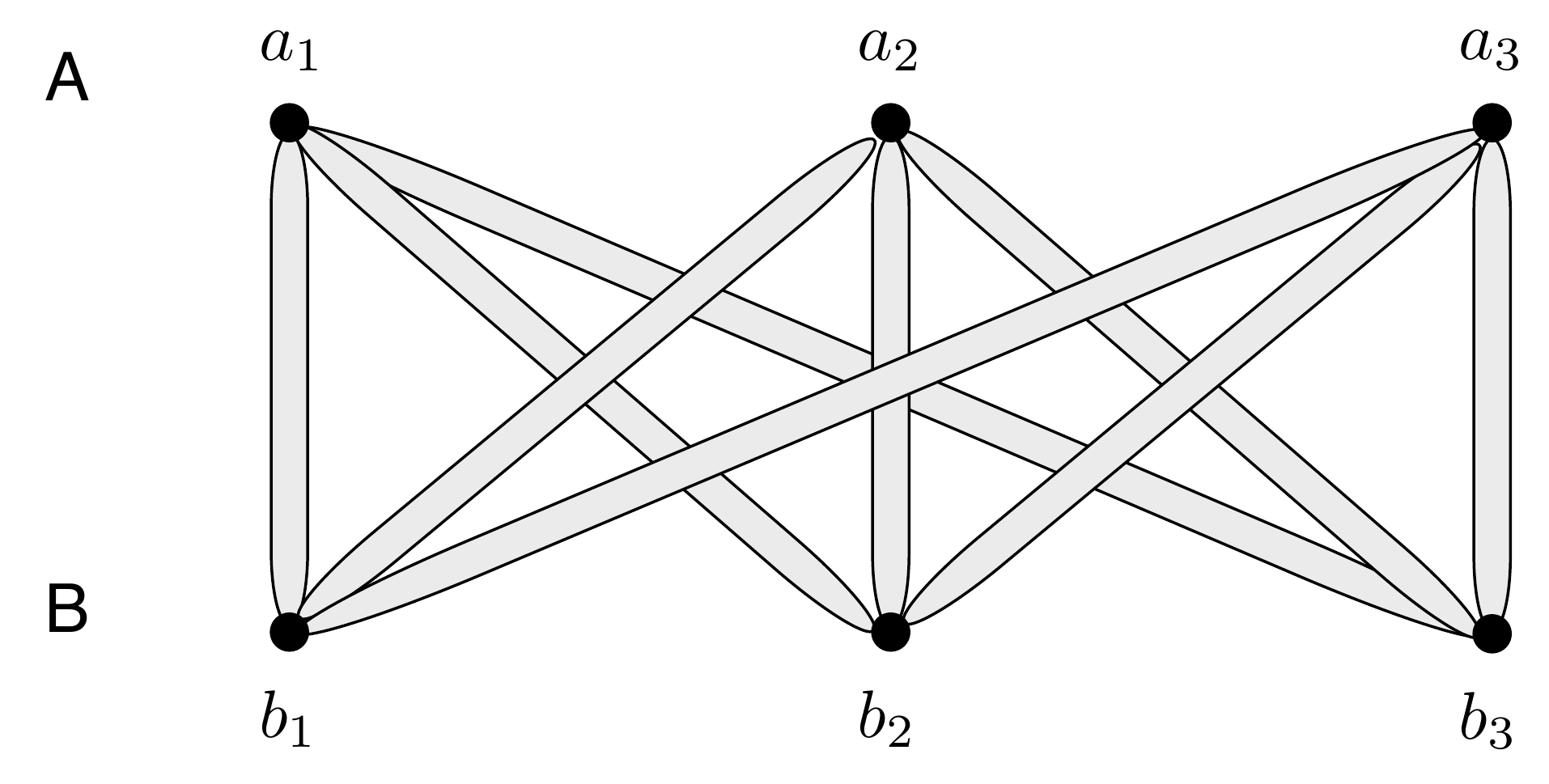}}
\end{centering}
\caption{An example $W$}
\label{fig-thm17}
\end{figure}

  Suppose the instance of \textsc{$\pi$-del-reconf$(G', S, T, |V(G)|+k, 4k)$}
  is a yes-instance.  As there is a copy of $H$ joining each
  vertex of $A$ to each vertex of $B$, before deleting $a \in A$ from
  $S$ the reconfiguration sequence must add all of $B$ to ensure that
  the complement of each intermediate set induces a graph in $\pi$.
  Otherwise, the complement will contain at least one copy of $H$ as a subgraph
  and is therefore not in $\pi$.
  The capacity bound of $|V(G)| + k$ implies that the reconfiguration
  sequence must have
  deleted from $S$ a subset $S' \subseteq V(G)$ of size at least $k$
  such that $V(G') \setminus (S \setminus S') = S' \cup B$ induces a
  subgraph in $\pi$. Thus, $G[S'] \in \pi$, and hence
  \textsc{$\pi$-subset$(G, k)$} is a yes-instance.

  Conversely if the instance of \textsc{$\pi$-subset$(G, k)$} is a
  yes-instance, then there exists $V' \subseteq V(G)$ such that $|V'|
  = k$ and $G[V'] \in \pi$.  We form a reconfiguration sequence
  between $S$ and $T$ by first deleting all vertices in $V'$ from $S$
  to yield a set of size $|V(G)|$.  $G'[V(G') \setminus (S \setminus
  V')]$ consists of the union of $G'[V'(G) \setminus S]$ and $G'[V'] =
  G[V']$, both of which are in $\pi$.  Next we add one by one all
  vertices of $B$, then delete one by one all vertices of $A$ and then
  add back one by one each vertex in the set $V'$ resulting in a
  reconfiguration sequence of length $k + k + k + k = 4k$. It is clear
  that in every step, the complement of the set induces a graph in
  $\pi$.

  Thus we have showed that \textsc{$\pi$-subset$(G, k)$} is a yes-instance
  if and only if there is a path of length at most $4k$ between $S$
  and $T$ in $R^{\pi}_{\textsc{del}}(G', |V(G)| + k)$.
  Since $|V(G')| - (|V(G)| + k) = k + k^2 (|V(H)| -2))$, this implies that
  \textsc{$\pi$-subset$(G, k)$} is a yes-instance if and only if there is
  a path of length at most $4k$ between ${V(G') \setminus S}$ and
  ${V(G') \setminus T}$ in $R^{\pi}_{\textsc{sub}}(G', k + k^2 (|V(H)| -2))$.
  Therefore, \textsc{$\pi$-sub-reconf$(G, S, T, k, \ell)$}
  parameterized by $k + \ell$ is at least as hard as
  \textsc{$\pi$-subset$(G, k)$}, proving the second part.\qed
\end{proof}

\begin{corollary}\label{corollary:hardness}
\textsc{Vertex Cover Reconfiguration}, \textsc{Feedback Vertex Set Reconfiguration},
and \textsc{Odd Cycle Transversal Reconfiguration} parameterized by $\ell$
are all $W[1]$-hard and
\textsc{Independent Set Reconfiguration}, \textsc{Forest Reconfiguration},
and \textsc{Bipartite Subgraph Reconfiguration} parameterized by $k + \ell$ are
all $W[1]$-hard.
\end{corollary}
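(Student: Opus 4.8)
The plan is to derive the corollary by applying Theorem~\ref{theorem-reconf-minmax} to three concrete hereditary properties and then combining it with known $W[1]$-hardness of the corresponding \textsc{$\pi$-subset} problems. Let $\pi_0$ be the property of having no edges, $\pi_{\mathrm{F}}$ the property of being a forest, and $\pi_{\mathrm{B}}$ the property of being bipartite. Then \textsc{$\pi$-del-reconf} for these three choices is \textsc{Vertex Cover Reconfiguration}, \textsc{Feedback Vertex Set Reconfiguration}, and \textsc{Odd Cycle Transversal Reconfiguration}, and \textsc{$\pi$-sub-reconf} for these three choices is \textsc{Independent Set Reconfiguration}, \textsc{Forest Reconfiguration}, and \textsc{Bipartite Subgraph Reconfiguration}. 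Each of $\pi_0$, $\pi_{\mathrm{F}}$, $\pi_{\mathrm{B}}$ is hereditary and non-trivial, so what remains is to verify the two bulleted hypotheses of Theorem~\ref{theorem-reconf-minmax} for each property and to supply $W[1]$-hardness of the three \textsc{$\pi$-subset} problems parameterized by $k$.

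The first hypothesis, closure under disjoint union, is immediate: disjoint unions of edgeless graphs are edgeless, of forests are forests, and of bipartite graphs are bipartite. For the second hypothesis I would take $H = K_2$ for $\pi_0$ and $H = C_3$, the triangle, for both $\pi_{\mathrm{F}}$ and $\pi_{\mathrm{B}}$. Each is a $\pi$-critical graph: $K_2$ is the unique element of $\Fh_{\pi_0}$, while $C_3$ is a minimal cycle and a minimal odd cycle, and deleting any of its three vertices leaves a single edge, which is both a forest and bipartite. For $H = K_2$, taking both of its vertices as terminals, $H_c^*$ is the star with centre the gluing vertex $w$ and the $c$ non-identified terminals $u_1,\dots,u_c$ as leaves; deleting $u_1,\dots,u_c$ leaves only $w$, which lies in $\pi_0$. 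For $H = C_3$, taking any two of its vertices as terminals and the third as internal, $H_c^*$ is a bouquet of $c$ triangles sharing the gluing vertex $w$, where the $i$-th triangle has vertices $w$, a non-identified terminal $u_i$, and an internal vertex $x_i$, with edges $wu_i$, $wx_i$, and $u_i x_i$; deleting $u_1,\dots,u_c$ leaves $w$ joined to each $x_i$, that is, the star $K_{1,c}$, which is a forest and is bipartite. Hence Theorem~\ref{theorem-reconf-minmax} applies to all three properties.

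By Theorem~\ref{theorem-reconf-minmax}(1), each of \textsc{Vertex Cover Reconfiguration}, \textsc{Feedback Vertex Set Reconfiguration}, and \textsc{Odd Cycle Transversal Reconfiguration} parameterized by $\ell$ is at least as hard as the corresponding \textsc{$\pi$-subset$(G,k)$}, and by Theorem~\ref{theorem-reconf-minmax}(2), each of \textsc{Independent Set Reconfiguration}, \textsc{Forest Reconfiguration}, and \textsc{Bipartite Subgraph Reconfiguration} parameterized by $k+\ell$ is at least as hard as the corresponding \textsc{$\pi$-subset$(G,k)$}. The reduction used in the proof of that theorem is in fact an FPT reduction from \textsc{$\pi$-subset$(G,k)$} parameterized by $k$: the target instance has $\ell = 4k$ and, in the subset case, solution-size parameter $k + k^2(|V(H)| - 2)$, each of which is bounded by a function of $k$ alone since $|V(H)| \in \{2,3\}$ is a constant depending only on $\pi$. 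Consequently $W[1]$-hardness of \textsc{$\pi$-subset$(G,k)$} parameterized by $k$ carries over to each of the six reconfiguration problems at its stated parameterization.

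Finally, \textsc{$\pi_0$-subset$(G,k)$} is exactly \textsc{Independent Set}, which is $W[1]$-complete parameterized by $k$, and \textsc{$\pi_{\mathrm{F}}$-subset$(G,k)$} and \textsc{$\pi_{\mathrm{B}}$-subset$(G,k)$} ask for a maximum induced forest, respectively a maximum induced bipartite subgraph, which are $W[1]$-hard parameterized by the number of vertices of the solution by the dichotomy of Khot and Raman for finding induced subgraphs with a hereditary property, since forests and bipartite graphs each contain every edgeless graph but do not contain $K_3$. Combining these facts with the reductions above proves the corollary. The conceptual work is all in Theorem~\ref{theorem-reconf-minmax}, which I assume; the only points needing care are checking the star-graph hypothesis for the two properties with infinite forbidden sets, which reduces to the observation above that a bouquet of triangles becomes a star once its non-identified terminals are deleted, and making sure the invoked $W[1]$-hardness results for maximum induced forest and maximum induced bipartite subgraph are at the parameter ``solution size $k$'', which is exactly the parameter from which the reduction in Theorem~\ref{theorem-reconf-minmax} begins. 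I do not expect either point to be an obstacle.
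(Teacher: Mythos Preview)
Your proposal is correct and follows essentially the same approach as the paper: you verify the two hypotheses of Theorem~\ref{theorem-reconf-minmax} using $H=K_2$ for the edgeless property and $H=C_3$ for forests and bipartite graphs, and then invoke the Khot--Raman dichotomy for the $W[1]$-hardness of \textsc{$\pi$-subset}. Your write-up is in fact slightly more explicit than the paper's (you spell out that the reduction is an FPT reduction with the target parameters bounded by functions of $k$), but the argument is the same.
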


\begin{proof}
It is known that for any hereditary property $\pi$ that consists of
all edgeless graphs but not all cliques~\cite{KR02}, \textsc{$\pi$-subset$(G,k)$} is $W[1]$-hard.
It is clear that the collections
of all edgeless graphs, of all bipartite graphs, and of all forests
satisfy this condition for hardness, as well as the hypothesis of
Theorem~\ref{theorem-reconf-minmax}.

For the collection of independent sets, the only $H \in \Fh_\pi$ is an
edge both of whose endpoints are terminals. Here identifying multiple copies of
$H$ at a terminal forms a star, and deleting the non-identified
terminal from each of the edges results in a single vertex, which is in
$\pi$.

For the collection of forests, and bipartite graphs, we let $H \in
\Fh_\pi$ be a triangle. When we identify multiple
triangles at a vertex, and remove another vertex of each of the
triangles, we obtain a tree, which is in $\pi$. \qed
\end{proof}

We obtain further results for properties not covered by
Theorem~\ref{theorem-reconf-minmax}.
Lemma~\ref{lemma-clique-cluster-hard} handles the collection of all
cliques, which does not satisfy the first condition of the theorem and
the collection of all {\em cluster graphs} (disjoint unions of
cliques), which satisfies the first condition but not the second.
Moreover, as \textsc{$\pi$-subset$(G, k)$} is
in {\em FPT} for $\pi$ the collection of all cluster
graphs~\cite{KR02}, Theorem~\ref{theorem-reconf-minmax} provides no
lower bounds.

\begin{lemma}\label{lemma-clique-cluster-hard}
\textsc{Clique Reconfiguration} and \textsc{Cluster Subgraph Reconfiguration}
parameterized by $k + \ell$ are $W[1]$-hard.
\end{lemma}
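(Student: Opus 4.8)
The goal is to prove $W[1]$-hardness of \textsc{Clique Reconfiguration} and \textsc{Cluster Subgraph Reconfiguration} parameterized by $k+\ell$, via an FPT reduction from the standard $W[1]$-complete problem \textsc{Clique} (the "subset" version \textsc{$\pi$-subset$(G,k)$} for $\pi$ the cliques, respectively $\pi$ the cluster graphs — both of which are $W[1]$-hard: the former is exactly \textsc{Clique}, and for cluster graphs a clique of size $k$ is a cluster subgraph of size $k$, and conversely any cluster subgraph of size $k$ in a graph with no $K_k$-free... — one has to be a bit careful here, see below). The reduction should mirror the structure of the proof of Theorem~\ref{theorem-reconf-minmax}: build a graph $G'$ consisting of the input graph $G$ together with a "gadget" $W$ that forces any reconfiguration sequence from a source solution $S$ to a target $T$ to pass through a configuration whose intersection with $V(G)$ is a large clique (cluster subgraph). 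Because the $\pi$ here is \emph{not} closed under disjoint union, the gadget cannot simply be a disjoint union as in Theorem~\ref{theorem-reconf-minmax}; instead the gadget and $G$ must be joined, and the "budget" arguments on $k$ and $\ell$ must be arranged so that the only way to shuttle between the two halves of the gadget is to temporarily borrow $k$ vertices from a clique inside $G$.

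**Key steps.** First I would fix the construction: take two disjoint cliques $K_A$ and $K_B$, each on (roughly) $N$ vertices where $N=|V(G)|$, and connect \emph{every} vertex of $K_A\cup K_B$ to \emph{every} vertex of $G$, but leave $K_A$ and $K_B$ with no edges between them; set $G'$ to be this graph. Take $S = V(G)\cup(\text{a fixed }k\text{-subset of }K_A)$... — actually the cleaner choice, to keep $S,T$ themselves cliques, is $S = (\text{a }k\text{-subset of }G\text{ that is a clique, if one exists})\cup K_A$ style objects; here the source and target should be honest cliques/cluster graphs of size $\ge k'$ for the threshold $k'$ we pick. Second, I would verify the two directions: (i) if $G$ has a $k$-clique $V'$, exhibit an explicit short reconfiguration sequence — delete down to something inside $K_A\cup V'$ or $K_B\cup V'$, swap $K_A$ for $K_B$ one vertex at a time while $V'$ (which is complete to both) keeps the set a clique, then restore — of total length linear in $N+k$, so $\ell = O(N+k)$ suffices; (ii) conversely, if the instance is a yes-instance, argue that because no vertex of $K_A$ is adjacent to any vertex of $K_B$, any intermediate solution that is a clique can contain vertices from at most one of $K_A,K_B$ (for cluster graphs: vertices of $K_A$ and $K_B$ cannot lie in a common cluster, but could lie in different clusters — this is the subtle point, handled by making the size threshold large enough that a single-cluster split is forced, or by also joining $K_A$ to $K_B$-complement appropriately). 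Hence at some step the solution is "anchored" only in $G$ together with possibly one side, and the size threshold $k'$ forces $\ge k$ of those vertices to lie in $V(G)$ and form a clique (cluster), giving a $k$-clique in $G$. Third, I would check that $k' = g(k)$ (e.g. $k' = N$, or $k'$ chosen as a function only of $k$ after first reducing $G$ so that $N$ is bounded — but $N$ need not be bounded, so the parameter of the produced instance must be $k+\ell$ and I must ensure $\ell$ is bounded by a function of $k$; this forces a redesign so that the gadget swap costs only $O(k)$ steps, not $O(N)$) and that the whole construction is polynomial-time.

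**The main obstacle.** The genuinely delicate point — and the reason this needs its own lemma rather than falling out of Theorem~\ref{theorem-reconf-minmax} — is the parameter bookkeeping: we are parameterizing by $k+\ell$, so $\ell$ must be $f(k)$, yet a naive gadget that contains $\Theta(|V(G)|)$-sized cliques $K_A,K_B$ needs $\Theta(|V(G)|)$ moves to convert $K_A$ into $K_B$, which is not bounded by any function of $k$. The fix is to make the source and target differ in only $O(k)$ vertices and to design the gadget so that the forced "detour through a $k$-clique of $G$" is itself of length $O(k)$: e.g. $S$ and $T$ agree on a common large core (lying in $G$ or in a shared clique part of the gadget) and differ only on a size-$\Theta(k)$ portion, with the adjacency structure (the missing $K_A$-$K_B$ edges, together with a carefully chosen threshold) certifying that any length-$O(k)$ transformation must at some point place $k$ core-vertices inside $V(G)$ in a clique. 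So the real work is (a) getting the gadget small enough in "reconfiguration diameter" while still forcing the clique, and (b) for the cluster-graph variant, ruling out the cheat where $K_A$ and $K_B$ vertices coexist in different clusters — which I expect to handle by adding a universal-to-the-gadget apex vertex or by making the relevant parts of the gadget into a single clique so that "cluster subgraph" collapses to "clique" on them. I would also double-check that \textsc{Cluster Subgraph} (size $\ge k$) is actually $W[1]$-hard to use as the source problem, or else reduce from \textsc{Clique} directly into the cluster version by exploiting that a $k$-clique is a $k$-vertex cluster graph and arranging $G'$ so no larger-than-clique cluster subgraph of the relevant size exists.
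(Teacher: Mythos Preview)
Your plan identifies both real obstacles --- keeping $\ell$ bounded by a function of the clique parameter $t$, and blocking the ``two separate clusters'' cheat in the cluster-subgraph version --- but it does not actually resolve either one, and the fixes you sketch do not work. With only two anticomplete cliques $K_A,K_B$ (of any size) joined completely to $X$, the set $K_A\cup K_B$ is itself a cluster graph, so one can walk from $K_A$ to $K_B$ via $K_A\cup K_B$ without ever touching $X$; no threshold adjustment rules this out, since $K_A\cup K_B$ has exactly the same size as any ``honest'' intermediate set. An apex vertex $u$ universal to $K_A\cup K_B$ does create $P_3$'s between the two sides while $u$ is present, but nothing forces $u$ to stay: one may drop $u$, sit on $K_A$ plus a single $b\in K_B$ as two clusters, and continue --- again never using $X$. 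So the core gadget idea is missing, not just its calibration.

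The paper's reduction uses \emph{four} $t$-cliques $A,B,C,D$ (not two), with $A\cup B$, $A\cup C$, $B\cup D$, $C\cup D$ each complete and $A$--$D$, $B$--$C$ anticomplete; every vertex of $X$ (the copy of $G$) is joined to all of $A\cup B\cup C\cup D$. One takes $S=A\cup B$, $T=C\cup D$, threshold $2t$, and $\ell=6t$. The point of the square arrangement is that whenever a vertex of $C$ coexists with a vertex of $B$ (or $D$ with $A$), the adjacent pair $A$--$C$ (resp.\ $B$--$D$) supplies the middle of an induced $P_3$, so the cluster-splitting cheat is impossible \emph{without} relying on a removable apex. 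Consequently, before any $T$-vertex can be added one of $A,B$ must be emptied; the size bound then forces at least $t$ vertices of $X$ to be present alongside some vertex of $A\cup B\cup C\cup D$, and that common neighbour turns any non-edge inside $X$ into a $P_3$, yielding a $t$-clique in $G$. The forward direction is the explicit length-$6t$ walk (add a $t$-clique $Y\subseteq X$; swap $A$ for $D$, then $B$ for $C$; drop $Y$), which stays a clique throughout and hence also handles \textsc{Clique Reconfiguration}. Because the whole gadget has $4t$ vertices, both $k=2t$ and $\ell=6t$ are bounded in $t$, which is exactly what your sketch was unable to arrange.
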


\begin{proof}
We first give an {\em FPT} reduction from \textsc{$t$-Clique}, known
to be $W[1]$-hard, to \textsc{Cluster Subgraph Reconfiguration}.  For
$(G, t)$ an instance of \textsc{$t$-Clique}, $V(G) = \{v_1, \ldots,
v_n\}$, we form a graph consisting of four $K_t$'s (with vertex sets
$A$, $B$, $C$, and $D$) and a subgraph mimicking $G$ (with vertex set
$X$), where there is an edge from each vertex in $X$ to each vertex in
each $K_t$, and each of subgraphs induced on the following vertex sets
induce a $K_{2t}$: $A \cup B$, $A \cup C$, $B \cup D$, $C \cup D$.
More formally, $G' = (X \cup A \cup B \cup C \cup D, E_X \cup E_T \cup E_C)$, where $X = \{x_1, \ldots, x_n\}$, $|A| = |B| = |C| = |D| = t$, $E_X = \{\{x_i,x_j\} \mid \{v_i,v_j\} \in E(G)\}$ corresponds to the edges in $G$, $E_T = \{\{a,a'\} \mid a,a' \in A, a \ne a'\} \cup
\{\{b,b'\} \mid b,b' \in B, b \ne b'\} \cup
\{\{c,c'\} \mid c,c' \in C, c \ne c'\} \cup
\{\{d,d'\} \mid d,d' \in D, d \ne d'\} $ forms the $K_t$ cliques, and
$E_C = \{\{x,a\}, \{x,b\}, \{x,c\}, \{x,d\}, \{a,b\}, \{a,c\}, \{b,d\}, \{c,d\} \mid a \in A, b \in B, c \in C, d \in D, x \in X\}$ forms the connections among the vertex setes.

We let $(G', S, T, 2t, 6t)$ be an instance of \textsc{Cluster Subgraph
  Reconfiguration}, where $S = A \cup B$ and $T = C \cup D$.  Clearly
$|S| = |T| = 2t$ and both $S$ and $T$ induce cluster graphs (in fact
cliques). We claim that $G$ has a clique of size $t$ if and only if
there is a path of length $6t$ from $S$ to $T$.

If $G$ has a clique of size $t$, then there exists a subset $Y
\subseteq X$ forming a clique of size $t$.  We form a reconfiguration
sequence of length $6t$ as follows; add the vertices $Y$, remove the
vertices in $A$, add the vertices in $D$, remove the vertices in $B$,
add the vertices in $C$, and remove the vertices in $Y$, one by one.
It is not hard to see that at every step in this sequence we maintain
an induced clique in $G'$ of size greater than or equal to $2t$ (and
hence a cluster subgraph).

If there exists a path of length $6t$ from $S$ to $T$, we make use of
the fact that no cluster subgraph contains an induced path of length
three to show that $G$ has a clique of size $t$.  Observe that before
adding any vertex of $C$, we first need to remove (at least) all of
$B$ since otherwise we obtain an induced path of length three
containing vertices in $C$, $A$, and $B$, respectively.  Similarly, we
cannot add any vertex of $D$ until we have removed all of $A$.
Therefore, before adding any vertex from $T$, we first need to delete
at least $t$ vertices from $S$.  To do so without violating our
minimum capacity of $2t$, at least $t$ vertices must be added from
$X$.  Since every vertex in $X$ is connected to all vertices in $S$
and $T$, if any pair of those $t$ vertices do not share an edge, we
obtain an induced path on three vertices. Thus $X$, and hence $G$,
must have a clique of size $t$.

Since in our reduction $S$ and $T$ are cliques and every reconfiguration
step maintains an induced clique in $G'$ of size greater than or equal to $2t$,
the same applies to the \textsc{Clique Reconfiguration} problem.
Consequently, both \textsc{Clique Reconfiguration} and \textsc{Cluster Subgraph Reconfiguration} parameterized by $k + \ell$ are $W[1]$-hard.
\qed
\end{proof}

As neither \textsc{Dominating Set} nor its parametric dual is a
hereditary graph property, Theorem~\ref{theorem-reconf-minmax} is
inapplicable; we instead use a construction specific to this
problem in Lemma~\ref{lemma-dom-hard}, which in turn leads to
Corollary~\ref{corollary-hitting-set}, since \textsc{Dominating Set}
can be phrased as a hitting set of the family of closed neighborhood
of the vertices of the graph.

\begin{lemma}\label{lemma-dom-hard}
\textsc{Dominating Set Reconfiguration} parameterized
by $k + \ell$ is $W[2]$-hard.
\end{lemma}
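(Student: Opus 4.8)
The plan is to give an FPT reduction from the standard parameterized \textsc{Dominating Set} problem, which is $W[2]$-complete, to \textsc{Dominating Set Reconfiguration}, with the new parameter being a function of the old one. Given an instance $(G, t)$ of \textsc{Dominating Set}, I would build a graph $G'$ together with two dominating sets $S$ and $T$ of size $k = O(t)$ and a length bound $\ell = O(t)$, such that $G$ has a dominating set of size $t$ if and only if $S$ can be reconfigured to $T$ in at most $\ell$ steps in the reconfiguration graph on dominating sets of size at most $k$. The core design principle, mirroring Theorem~\ref{theorem-reconf-minmax} and Lemma~\ref{lemma-clique-cluster-hard}, is to force any short reconfiguration sequence to pass through an intermediate solution that ``witnesses'' a small dominating set of the copy of $G$ embedded in $G'$.

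The construction I would use: take the vertex set of $G$ (call this copy $X$), and add a controlled gadget consisting of two ``blocker'' structures $A$ and $B$, each responsible for dominating $X$ in one of the two endpoint solutions. Concretely, I would attach to $X$ a small set of vertices so that $S$ (built from $A$ together with some fixed dominators of the gadget) dominates all of $G'$, $T$ (built from $B$ similarly) dominates all of $G'$, but in order to move from an $A$-based solution to a $B$-based one without ever leaving the family of dominating sets of size $\le k$, the sequence must temporarily use $t$ vertices of $X$ to cover the portion of $X$ that neither $A$ nor $B$ can cover alone. I would add pendant-style or ``enforcer'' vertices whose only dominators are in $X \cup A$ (respectively $X \cup B$), so that while $A$ is being dismantled those vertices must be dominated from within $X$; a capacity bound $k$ chosen tightly (so that we cannot keep all of $A$, all of $B$, and $t$ vertices of $X$ simultaneously) does the forcing. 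For the converse, a size-$t$ dominating set $Y \subseteq X$ of $G$ yields an explicit short sequence: add $Y$, delete $A$ one by one, add $B$ one by one, delete $Y$, for total length $\Theta(t)$; every intermediate set is a dominating set of $G'$ because $Y$ dominates $X$ and the gadget dominators handle the rest.

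The verification splits, as usual, into soundness and completeness. Completeness (a $t$-dominating set gives a short sequence) is the routine direction, checked step by step as above. Soundness is where the real work lies: I need to argue that \emph{any} reconfiguration sequence of length $\le \ell$ from $S$ to $T$ must, at some intermediate step, contain a set of $\ge t$ vertices of $X$ forming a dominating set of $G$. The argument would track which of the enforcer vertices are dominated at each step: as long as a given enforcer attached to $A$ is not dominated ``from $X$'', at least one specific $A$-vertex must remain, and symmetrically for $B$; combining these constraints with the capacity bound $k$ forces that at the moment the sequence switches over, $X \cap U$ must dominate all of $X$, and the capacity bound forces $|X \cap U| \le t$. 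Ensuring these numerical constraints are simultaneously satisfiable — that is, choosing the sizes of $A$, $B$, the enforcer gadgets, $k$, and $\ell$ so that the forward sequence fits within the bound while the capacity is still tight enough to force the witness — is the main obstacle, and it is essentially a bookkeeping exercise once the gadget is right.

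Finally, I would confirm that the reduction is an FPT reduction: $G'$ has size polynomial in $|V(G)|$ (the gadget adds only $O(t)$ vertices plus $O(1)$ enforcers per forced vertex), it is computable in polynomial time, and the new parameter $k + \ell = \Theta(t)$ is a function of the old parameter $t$. Since the standard parameterized \textsc{Dominating Set} is $W[2]$-hard, this establishes that \textsc{Dominating Set Reconfiguration} parameterized by $k + \ell$ is $W[2]$-hard, as claimed.
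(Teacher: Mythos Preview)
Your high-level plan---FPT-reduce from \textsc{$t$-Dominating Set}, build a gadget whose reconfiguration forces an intermediate solution to contain a size-$t$ dominating set of a copy of $G$, and check that $k+\ell=\Theta(t)$---matches the paper's approach. But the proposal stops exactly where the real work begins: you do not specify a gadget, and the one you gesture at does not obviously work for domination.

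The difficulty is this. You propose enforcer vertices ``whose only dominators are in $X\cup A$'' (resp.\ $X\cup B$). But $T$ must already be a dominating set of $G'$, and $T$ contains no vertex of $A$; so $T$ would have to dominate those enforcers from $X$, i.e.\ $T$ would already encode a dominating set of $G$, trivialising the reduction. If instead the enforcers are dominated by $X\cup A\cup B$, then nothing prevents a gradual swap in which part of $A$ and part of $B$ coexist and jointly cover all enforcers without ever needing $X$. The analogy with Theorem~\ref{theorem-reconf-minmax} breaks because domination is not hereditary: the forbidden-subgraph enforcer trick (a copy of $H$ glued at a terminal) has no direct analogue.

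The paper's construction is accordingly more elaborate than your sketch. It uses two \emph{disjoint} gadgets. One, $G'_1$, is a purely combinatorial swap gadget built from $t{+}2$ cliques, designed so that no vertex of $S\cap V(G'_1)$ can be removed until $t{+}1$ new vertices have been added---forcing a moment of peak capacity usage. The other, $G'_2$, contains $t{+}1$ copies of $G$ with universal vertices and cross-edges arranged so that removing any universal vertex requires a small dominating set of $G$ placed in the base copy $H_0$; this is the only way to free up the one unit of capacity needed for the swap in $G'_1$. The interplay of these two gadgets is what drives the soundness argument, and it is not ``essentially a bookkeeping exercise''---it is the content of the proof. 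Your proposal would need to supply a concrete gadget with a comparable mechanism before it could be considered complete.
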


\begin{proof}
We give a reduction from \textsc{$t$-Dominating Set};
for $(G, t)$ an instance of \textsc{$t$-Dominating Set}, we
form $G'$ as the disjoint union of two graphs $G'_1$ and $G'_2$.

We form $G'_1$ from  $t+2$ $(t+1)$-cliques
$C_0$ (the {\em outer clique}) and $C_1$, \ldots, $C_{t+1}$ (the
{\em inner cliques}); $V(C_0) = \{o_1, \ldots, o_{t+1}\}$ and
$V(C_i) = \{w_{(i,0)},w_{(i,1)},\ldots w_{(i,t)}\}$ for $1 \le i
\le t+1$.  The edge set of $G'_1$ contains not only the edges of
the cliques but also $\{\{o_j,w_{(i,j)}\} \mid 1 \le i \le t+1, 0
\le j \le t\}$; the graph to the left in Figure~\ref{fig-dom}
illustrates $G'_1$ for $t= 2$.  Any dominating set that
does not contain all vertices in the outer clique must contain a
vertex from each inner clique.

\begin{figure}
\begin{centering}
\centerline{\includegraphics[scale=0.45]{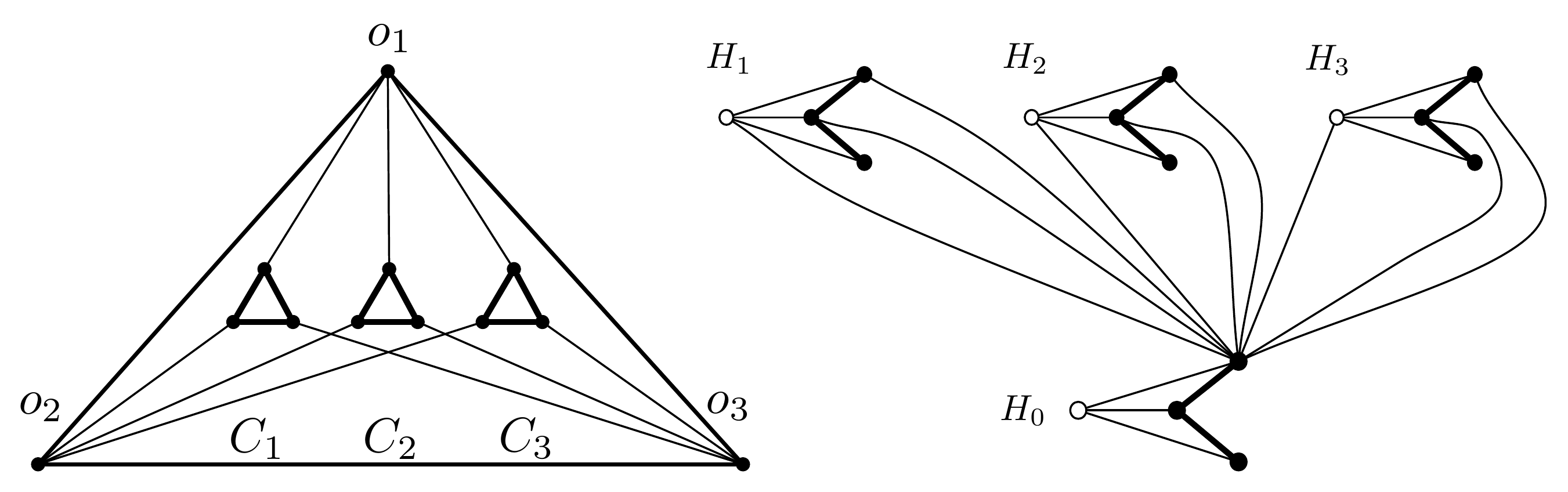}}
\end{centering}
\caption{Graphs used for the dominating set reduction}
\label{fig-dom}
\end{figure}

To create $G'_2$, we first define $G^+$ to be the graph formed by
adding a universal vertex to $G$, where we assume without loss of
generality that $V(G) = \{v_1,\ldots,v_{|V(G)|}\}$.  We let $V(G'_2) =
\cup_{0 \le i \le t}V(H_i)$, where $H_0, \ldots, H_{t}$ are
$t+1$ copies of $G^+$; we use $u_i$ to denote the universal vertex
in $H_i$ and $v_{(i,j)}$ to denote the copy of $v_j$ in $H_i$, $1 \le
j \le |V(G)|$, $0 \le i \le t$.  The edge set consists of edges
between each non-universal vertex $v_{(0,j)}$ in $H_0$ and, in each
$H_i$, the universal vertex, its image, and the images of its
neighbours in $G$, or more formally $E(G'_2) = \{\{v_{0,j},u_i\} \mid
1 \le j \le |V(G)|, 1 \le i \le t\} \cup \{\{v_{0,j},v_{i,j}\} \mid
1 \le j \le |V(G)|, 1 \le i \le t\} \cup \{\{(v_{0,j},v_{i,k}\}
\mid 1 \le j \le |V(G)|, 1 \le i \le t, (v_j,v_k) \in E(G)\}$.  The
graph to the right in Figure~\ref{fig-dom} illustrates part of $G'_2$,
where universal vertices are shown in white and, for the sake of
readability, the only edges outside of $G^+$ shown are those adjacent
to a single vertex in $H_0$.

We form an instance $(G', S, T, 3t + 2, 6t + 4)$ of
\textsc{Dominating Set Reconfiguration}, where $S = \{u_i \mid 0 \le i
\le t\} \cup V(C_0)$ and $T = \{u_i \mid 0 \le i \le t\} \cup
\{w_{i,i-1} \mid 1 \le i \le t+1\}$.  Both $S$ and $T$ are
dominating sets, as each universal vertex $u_i$
dominates $H_i$ as well as $H_0$ and $V(G'_1)$ is dominated by
the outer clique in $S$ and by one vertex from each inner clique in
$T$. Clearly $|S| = |T| = 2t+ 2$.

We claim that $G$ has a dominating set of size $t$ if and only if
there is a path of length $6t + 4$ from $S$ to $T$. In $G'_1$, to remove any vertex from the outer clique, we
must first add a vertex from each inner clique, for a total of $t + 1$ additions;
since $k = 3t + 2$ and $|S| = 2t + 2$, this
can only take place after $G'_2$ has been dominated using at most
$t$ vertices.
In $G'_2$, a universal vertex $u_i$ cannot be deleted
until $H_i$ has been dominated. If $G$ can be dominated with $t$
vertices, then it is possible to add the dominating set in $H_0$
and remove all the universal vertices, thus making the required capacity available.
If not, then none of the universal vertices, say $u_i$, can be
removed without first adding at least $t+1$ vertices to dominate
$H_i$, for which there is not enough capacity. Therefore, there exists a
reconfiguration sequence from $S$ to some $S'$ such that $S' \cap G'_2$
has $t$ vertices if and only if $G$ has a dominating
set of size $t$. Moreover, the existence of a dominating set $D$
of size $t$ in $G$ implies a path of length $6t + 4$ from $S$
to $T$; we add $D$ in $H_0$, remove all universal vertices, reconfigure
$G'_1$, add all universal vertices, and then remove $D$. Consequently,
there exists a reconfiguration sequence from $S$ to $T$ in $6t + 4$
steps if and only if $G$ has a dominating set of size $t$.
\qed
\end{proof}

The following is a result of there being a polynomial-time
parameter-preserving reduction from \textsc{Dominating Set}:

\begin{corollary}\label{corollary-hitting-set}
\textsc{Unbounded Hitting Set Reconfiguration} parameterized by $k + \ell$
is $W[2]$-hard.
\end{corollary}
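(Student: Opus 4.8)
The plan is to give a straightforward parameter-preserving reduction from \textsc{Dominating Set Reconfiguration}, which Lemma~\ref{lemma-dom-hard} shows to be $W[2]$-hard when parameterized by $k+\ell$. First I would recall the classical equivalence between domination and hitting set: for a graph $G$, a set $D \subseteq V(G)$ is a dominating set of $G$ if and only if $D$ intersects the closed neighbourhood $N[v]$ of every vertex $v \in V(G)$. Thus, given an instance $(G,S,T,k,\ell)$ of \textsc{Dominating Set Reconfiguration}, I would form the set family $\mathcal{F} = \{\, N[v] \mid v \in V(G)\,\}$ over the universe $V(G)$ and output the instance $(\mathcal{F}, S, T, k, \ell)$ of \textsc{Unbounded Hitting Set Reconfiguration}. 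This transformation runs in polynomial time and changes neither $k$ nor $\ell$, so it is an FPT reduction with respect to the parameter $k+\ell$.

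The next step is to verify correctness, namely that the two reconfiguration graphs coincide. Since $D$ is a dominating set of $G$ of size at most $k$ exactly when $D$ is a hitting set of $\mathcal{F}$ of size at most $k$, the node sets of the dominating-set reconfiguration graph of $G$ and of the hitting-set reconfiguration graph of $\mathcal{F}$ are identical; and in both problems two solutions are adjacent precisely when they differ by the addition or deletion of a single element of $V(G)$. Hence the reconfiguration graphs are the same graph, so $(G,S,T,k,\ell)$ is a yes-instance of \textsc{Dominating Set Reconfiguration} if and only if $(\mathcal{F},S,T,k,\ell)$ is a yes-instance of \textsc{Unbounded Hitting Set Reconfiguration}. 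Combined with Lemma~\ref{lemma-dom-hard}, this yields the claimed $W[2]$-hardness.

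There is essentially no obstacle here; the only point worth stressing is why the word \emph{Unbounded} is needed. The sets $N[v]$ may have size up to $n$, so the family $\mathcal{F}$ produced by the reduction is not in general an instance of \textsc{Bounded Hitting Set}; indeed, by Theorem~\ref{hittingsettheorem} the bounded version admits a polynomial reconfiguration kernel in $k$, so any hardness proof must unavoidably exploit sets of unbounded size. It is also worth noting that the reduction is parameter-preserving in the strong sense that it preserves $k$ and $\ell$ individually, which is exactly what the remark preceding the corollary refers to and what makes the hardness transfer immediate for the parameterization by $k+\ell$.
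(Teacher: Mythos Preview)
Your proposal is correct and follows exactly the approach the paper intends: the text preceding the corollary already notes that \textsc{Dominating Set} ``can be phrased as a hitting set of the family of closed neighborhood of the vertices of the graph,'' and the corollary is stated without proof as an immediate consequence of this polynomial-time parameter-preserving reduction together with Lemma~\ref{lemma-dom-hard}. You have simply spelled out the details the paper leaves implicit, including the observation about why the \emph{Unbounded} qualifier is necessary.
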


\vspace{-0.2cm}
\section{Conclusions and Directions for Further Work}
Our results constitute the first study of the parameterized complexity
of reconfiguration problems.  We give a general paradigm, the reconfiguration kernel, for
proving fixed-parameter tractability,
and provide hardness reductions that apply to problems associated with
hereditary graph properties.
Our result on cluster graphs (Lemma~\ref{lemma-clique-cluster-hard})
demonstrates the existence of a problem that is fixed-parameter
tractable~\cite{KR02}, but whose reconfiguration version is $W$-hard
when parameterized by $k$; this clearly implies that fixed-parameter
tractability of the underlying problem does not guarantee
fixed-parameter tractability of reconfiguration when parameterized by $k$.  
Since there is
unlikely to be a polynomial-sized kernel for the problem of
determining whether a given graph has a cluster of size at least
$k$~\cite{KPRR12}, it is possible (though in our opinion, unlikely)
that an underlying problem having a polynomial-sized kernel is
sufficient for the reconfiguration problem to be fixed-parameter
tractable when parameterized by $k$.

 

It remains open whether there exists an NP-hard problem
for which the reconfiguration version is in {\em FPT} when parameterized by
$\ell$.


Our {\em FPT} algorithms for reconfiguration of \textsc{Bounded Hitting Set} and \textsc{Feedback Vertex Set} have running times of $O^*(2^{O(k \lg k)})$.  Further work is needed to determine whether the running times can be improved to
$O^*(2^{O(k)})$, or whether these bounds are tight under the {\em Exponential Time Hypothesis}.

We observe connections to another well-studied paradigm, local
search~\cite{FRFLSV90}, where the aim is to find
an {\em improved solution} at distance $\ell$ of a given
solution $S$.  Not surprisingly, as in
local search, the problems we study turn out to be hard even in the
parameterized setting when parameterized by $\ell$.
Other natural directions to
pursue (as in the study of local search) are the parameterized complexity of reconfiguration problems in special classes of graphs and of non-graph reconfiguration problems,
as well as other parameterizations.


\vspace{-0.1cm}
\subsection*{Acknowledgements}
The second author wishes to thank Marcin Kami\'{n}ski for suggesting
the examination of reconfiguration in the parameterized setting.

\bibliographystyle{acm}	
\bibliography{references}
\newpage

\end{document}